\def\showauthornotes{1}
\def\showdraftbox{0}
\newenvironment{fminipage}%
  {\begin{Sbox}\begin{minipage}}%
  {\end{minipage}\end{Sbox}\fbox{\TheSbox}}
\newcommand{\defeq}{\stackrel{\textup{def}}{=}}
\newtheorem{theorem}{Theorem}[section]
\newtheorem{lemma}[theorem]{Lemma}
\newtheorem{definition}[theorem]{Definition}
\newcommand{\nfrac}[2]{\nicefrac{#1}{#2}}
\newcommand{\ceil}[1]{\left\lceil\, {#1}\,\right\rceil}
\newcommand\rea{\mathbb R}
\newcommand{\marginlabel}[1]%
{\mbox{}\marginpar{\it{\raggedleft\hspace{0pt}#1}}}
\definecolor{Mygray}{gray}{0.8}
\let\csname ifcommentflag\expandafter\endcsname
\newcommand{\Authornote}[2]{{\sf\small\color{red}{[#1: #2]}}}
\newcommand{\Authoredit}[2]{{\sf\small\color{red}{[#1]}\color{blue}{#2}}}
\newcommand{\Authorcomment}[2]{{\sf \small\color{gray}{[#1: #2]}}}
\newcommand{\Authorfnote}[2]{\footnote{\color{red}{#1: #2}}}
\newcommand{\Authorfixme}[1]{\Authornote{#1}{\textbf{??}}}
\newcommand{\Authormarginmark}[1]{\marginpar{\textcolor{red}{\fbox{
#1:!}}}}
\newcommand{\Authornote}[2]{}
\newcommand{\Authoredit}[2]{}
\newcommand{\Authorcomment}[2]{}
\newcommand{\Authorfnote}[2]{}
\newcommand{\Authorfixme}[1]{}
\newcommand{\Authormarginmark}[1]{}
\newcommand{\paren}[1]{\left({#1}\right)}
\newlength{\pgmtab}  
\newcommand {\assign}{\leftarrow}
\def\qedsketch{\ifmmode\Box\else{\unskip\nobreak\hfil
\penalty50\hskip1em\null\nobreak\hfil$\Box$
\parfillskip=0pt\finalhyphendemerits=0\endgraf}\fi}
\newlength{\tpush}
\newcommand{\handout}[5]{
   \noindent
   \begin{center}
   \framebox{ \vbox{ \hbox to \textwidth { {\bf \coursenum\ :\  \coursename} \hfill #5 }
       \vspace{3mm}
       \hbox to \textwidth { {\Large \hfill #2  \hfill} }
       \vspace{1mm}
       \hbox to \textwidth { {\it #3 \hfill #4} }
     }
   }
   \end{center}
   \vspace*{4mm}
   \newcommand{\lecturenum}{#1}
   \addcontentsline{toc}{chapter}{Lecture #1 -- #2}
}
\newcommand{\todo}[1]{\colorbox{Mygray}{\color{red}\parbox{\textwidth}{#1}}}
\newcommand{\todo}[1]{}
\newcommand{\eps}{\varepsilon}
\newcommand{\lap}{\ensuremath{\boldsymbol{L}}}
\newcommand\Otil[1]{\ensuremath{\widetilde{O}\left(#1\right)}}
\newcommand\LL{\boldsymbol{\mathit{L}}}
\newcommand\xx{\boldsymbol{\mathit{x}}}
\newcommand\vv{\boldsymbol{\mathit{v}}}
\newcommand{\DegPreserveSparsify}{{\textsc{DegreePreservingSparsify}}}
\newcommand{\SpectralSketch}{{\textsc{SpectralSketch}}}
\newcommand{\NaiveCycleDecomposition}{{\textsc{NaiveCycleDecomposition}}}
\newcommand{\CycleDecomposition}{{\textsc{CycleDecomposition}}}
\newcommand{\EulerianSparsify}{{\textsc{EulerianSparsify}}}
\newcommand{\dir}[1]{{\vec{#1}}}
\newcommand{\etal}{\emph{et al.}}
\newcommand{\extra}{\mathrm{\textbf{extra}}}
\newcommand{\arrow}{\vec}
\newcommand{\LowDiamDecomp}{{\textsc{LowDiamDecomp}}}
\newcommand{\ShortCycleDecomp}{{\textsc{ShortCycleDecomp}}}
\newcommand{\TreeSplit}{{\textsc{TreeSplit}}}
\newcommand{\NaiveShortCycle}{{\textsc{NaiveShortCycle}}}
\newcommand{\Sparsify}{{\textsc{Sparsify}}}
\newcommand{\PullUp}{{\textsc{PullUp}}}
\renewcommand{\hat}{\widehat}
\renewcommand{\tilde}{\widetilde}
\newcommand{\ImprovedShortCycle}{{\textsc{ImprovedShortCycle}}}
\newcommand{\diam}{\mathrm{diam}}
\newcommand{\OneRoundShortCycle}{{\textsc{OneRoundShortCycle}}}
\newcommand{\GraphReduce}{\textsc{GraphReduce}}
\newcommand{\SparsifyHelper}{\textsc{SparsifyHelper}}
\title{%
Short Cycles via Low-Diameter Decompositions
}%
\author{
  Yang P. Liu\thanks{Stanford University.
    \texttt{yangpatil@gmail.com}. Research supported by the U.S.
Department of Defense via an NDSEG fellowship.}
  \and
  Sushant Sachdeva\thanks{ University of Toronto.
    \texttt{sachdeva@cs.toronto.edu}. Research supported in part by the
Natural Sciences and Engineering Research Council of Canada (NSERC),
and a Connaught New Researcher award.}
  \and
  Zejun Yu\thanks{University of Waterloo.
    \texttt{z248yu@uwaterloo.ca}. This work was done when this author
    was an undergrad student at the University of Toronto.}
  }
\date{\today}
\begin{document}

\maketitle
\thispagestyle{empty}

\begin{abstract}{
  We present improved algorithms for \emph{short cycle decomposition}
  of a graph -- a decomposition of an undirected, unweighted graph
  into edge-disjoint cycles, plus a small number of additional
  edges. Short cycle decompositions were introduced in the recent work
  of Chu~{\etal} (FOCS 2018), and were used to make progress on several
  questions in graph sparsification.}

  {
  For all constants $\delta \in (0,1]$, we
  give an $O(mn^\delta)$ time algorithm that, given a graph $G,$
  partitions its edges into cycles of length
  $O(\log n)^\frac{1}{\delta}$, with $O(n)$ extra edges not in any
  cycle. This gives the first subquadratic, in fact almost linear
  time, algorithm achieving polylogarithmic cycle lengths.  We also
  give an $m \cdot \exp(O(\sqrt{\log n}))$ time algorithm that
  partitions the edges of a graph into cycles of length
  $\exp(O(\sqrt{\log n} \log\log n))$, with $O(n)$ extra edges not in
  any cycle.  This improves on the short cycle decomposition
  algorithms given by Chu~\emph{et al.} in terms of all parameters, and is
  significantly simpler.}

  {  As a result, we obtain faster algorithms
    and improved guarantees for several problems in graph
    sparsification -- construction of resistance sparsifiers,
    graphical spectral sketches, degree preserving sparsifiers, and
    approximating the effective resistances of all edges.}
\end{abstract}


\newpage
\setcounter{page}{1}

\section{Introduction}

Graph sparsification is the problem of approximating a graph $G$ by a
sparse graph $H,$ while preserving some key properties of the
graph. Several notions of graph sparsification have been studied. For
instance, graph spanners introduced by Chew~\cite{Chew89}
approximately preserve distances, and cut-sparsifiers introduced by
Benczur and Karger~\cite{BenczurK96} approximately preserve the sizes
of all cuts.

The notion of spectral sparsification defined by Spielman and
Teng~\cite{ST11b, SpielmanT04} approximately preserves the Laplacian
quadratic form of the graph. To define a spectral sparsifier, we
recall the definition of the Laplacian of a graph.  For an undirected,
weighted graph $G = (V, E_G, w_G),$ with $n$ vertices and $m$ edges,
the Laplacian of $G,$ $\lap_G$ is the unique symmetric $n \times n$
matrix such that for all $\xx \in \rea^{n},$ we have
\[
  \xx^{\top}\lap_G \xx = \sum_{(u,v) \in E_G} w_G(u,v)(\xx_u - \xx_v)^{2}.
\]
For $\eps < 1,$ a graph $H$ is said to be an $\eps$-sparsifier for $G$
if we have
\[\forall  \xx \in \rea^{n}, \quad(1-\eps)\xx^{\top}\LL_{G}\xx \le \xx^{\top}\LL_{H}\xx \le
  (1+\eps)\xx^{\top}\LL_{G}\xx.
\]
Considering $\xx$ as indicator vectors of a cut shows that a spectral
sparsifier is also a cut sparsifier.

Spectral sparsifiers have found numerous applications in graph
algorithms -- they are a crucial component of several fast solvers
for Laplacian linear systems (this was the main objective of Spielman
and Teng)~\cite{SpielmanT04, SpielmanTengSolver:journal, KoutisMP10,
  KoutisMP11}. Additionally, they are the \emph{only} graph theoretic primitive in some
of them~\cite{PengS14, KyngLPSS16}, such as faster cut and flow
algorithms~\cite{Sherman13, Sherman09, ChristianoKMST10}, sampling
random spanning trees~\cite{DurfeeKPRS17}, estimating
determinants~\cite{DurfeePPR17}, etc.

Spectral sparsification is widely considered to be well understood:
Following a sequence of works~\cite{ST11b, SpielmanS08:journal},
Batson, Spielman, and Srivastava~\cite{BSS12, BatsonSS09} showed how
to construct graph sparsifiers with $O(n\eps^{-2})$ edges. We also
know that this bound is tight for graphs~\cite{BSS12}, and even for
arbitrary data-structures that can answer the sizes of all cuts up to
$(1\pm\eps)$ factors~\cite{CarlsonKNT17:arxiv}.

However, a sequence of recent works~\cite{AndoniCKQWZ16, DinitzKW15,
  JambulapatiS18,CohenKPPRSV17} opened up several interesting new
directions and open questions in spectral sparsification:
\begin{enumerate}
\item Building on the work of Andoni~\etal~\cite{AndoniCKQWZ16},
  Jambulapati and Sidford~\cite{JambulapatiS18}, showed how to
  construct data structures (\emph{spectral sketches}) with
  $\Otil{n\eps^{-1}}$ space that can estimate the quadratic form
  $\xx^{\top}\LL_{G}\xx$ for a \emph{fixed} unknown vector
  $\xx \in \rea^{n}$ with high probability,

  even though any data-structure that can answer queries even for all
  $\xx \in \{\pm 1\}^{n}$ needs $\Omega(n\eps^{-2})$
  space~\cite{AndoniCKQWZ16}. Do there exist graphs with
  $\Otil{n\eps^{-1}}$ edges that are spectral sketches?
\item Dinitz~\etal~\cite{DinitzKW15} showed that for expander graphs,
  there exist \emph{resistance sparsifiers} with $\Otil{n\eps^{-1}}$
  edges. Resistance sparsifiers preserve the effective
  resistance\footnote{The effective resistance between $u,v$ is the
    potential difference between $u,v$ if the graph is considered an
    electrical network with edge $e$ with weight $w_e$ as a resistor
    with resistance $1/w_e,$ and a unit current is sent from $u$ to
    $v.$} between all pairs of vertices up to $(1\pm\eps).$
  Dinitz~\etal~conjecture that all graphs have resistance sparsifiers
  with $\Otil{n\eps^{-1}}$ edges.
\end{enumerate}

A recent work of Chu~\etal~\cite{ChuGPSSW18} answered both the above
questions affirmatively, giving the first constructions of graphical
spectral sketches and resistance sparsifiers for all graphs with
$\Otil{n\eps^{-1}}$ edges.

A key component of their algorithms is a novel decomposition of graphs
-- a short-cycle decomposition --
into short edge-disjoint cycles and a few extra edges.
\begin{definition}{Short Cycle Decomposition. \cite{ChuGPSSW18}}
  A $(\hat{k}, L)$-short cycle decomposition of an unweighted
  undirected graph $G$ decomposes $G$ into several edge-disjoint
  cycles, each of length at most $L$, and at most $\hat{k}$ edges not
  in these cycles.
\end{definition}

In addition to resolving the above two open problems, Chu~\etal~also
show that short cycle decompositions can be used for constructing
spectral sparsifiers that preserve degrees, sparsifying Eulerian
directed graphs (directed graphs with all vertices having in-degree
equal to out-degree), and faster estimation of effective resistances.

The existence of $(2n,2\log n)$-short cycle decompositions follows
from a simple observation that every graph with minimum degree 3 must
have a cycle of length $2 \log n,$ which can be found by a simple
breadth-first search. If a graph has more than $2n$ edges, iteratively
removing vertices of degree at most 2 must leave a graph with
min-degree 3, and hence the graph contains a short cycle. Removing
this cycle and repeating gives a simple $O(mn)$ time algorithm for
producing a $(2n, 2\log n)$-short cycle decomposition of a graph $G$.
It is described as \NaiveCycleDecomposition~in~\cite[Algorithm
11]{ChuGPSSW18}.

In order to give almost-linear time algorithms for their applications,
Chu~\etal~\cite{ChuGPSSW18} give an algorithm
{\textsc{ShortCycleDecomposition}} \cite[Algorithm 15]{ChuGPSSW18}
that runs in time $m \cdot \exp(O(\log n)^{\nfrac{3}{4}}),$ and
returns a
$(n \exp(O(\log n \log\log n)^{\nfrac{3}{4}}), \exp(O(\log
n)^{\nfrac{3}{4}}))$-short cycle decomposition of the graph
(see~\cite[Theorem 3.11]{ChuGPSSW18}).

\subsection{Our Contributions.}
Our main result is a new algorithm for short cycle decomposition,
which improves over the algorithms in the work of Chu
\etal~\cite{ChuGPSSW18} in terms of all parameters, is faster, and
considerably simpler.
\begin{theorem}
  \label{thm:runtimeandreduction}
  For all integers $c \ge 1$, we give an algorithm that, given a graph
  $G$ with $n$ vertices and $m$ edges, runs in time
  $O\left(mn^\frac{1}{c+1} \cdot 500^c \right),$ and returns a
  $(O(n), O(\log n)^c)$-short cycle decomposition of $G$ with high
  probability.
\end{theorem}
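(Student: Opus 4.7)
The plan is to give a recursive algorithm driven by low-diameter decompositions (LDDs), proved correct by induction on $c$. Let $\mathcal{A}_c$ denote the algorithm at parameter $c$. As preprocessing, I iteratively peel off vertices of degree at most $2$ and absorb the (at most $2n$) removed edges into the extra-edge budget, so we may henceforth assume the input graph $G$ has minimum degree at least $3$.

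The core subroutine of $\mathcal{A}_c$ first applies an LDD (e.g., Miller--Peng--Xu style) that partitions $V(G)$ into clusters of weak diameter $O(\log n)$ in $G$ with only a constant fraction of inter-cluster edges. Within each cluster $S$, I build a BFS tree $T_S$ of depth $O(\log n)$ and extract short cycles by pairing non-tree intra-cluster edges via a bottom-up traversal of $T_S$: at each internal node, match pending unmatched non-tree edges from different subtrees, forming a cycle that uses the two non-tree edges joined by tree paths through the matching point. Each such cycle has length $O(\log n)$, and disjoint pairings guarantee edge-disjointness. Inter-cluster edges and the small residual within dense clusters are fed forward to the next level of the recursion.

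For the base case $c = 1$, I iterate this primitive $O(\sqrt{n})$ times on the shrinking residual; combined with the LDD cost, this gives an $O(m\sqrt{n})$-time construction of an $(O(n), O(\log n))$-short cycle decomposition. For $c > 1$, I handle dense clusters (where the tree edges run out before all non-tree edges can be paired) by recursively invoking $\mathcal{A}_{c-1}$ on the residual intra-cluster subgraph, and then lifting its output cycles (of length at most $O(\log n)^{c-1}$) through $T_S$ to edge-disjoint cycles of length $O(\log n) \cdot O(\log n)^{c-1} = O(\log n)^c$ in $G$. This lifting step is where the cycle length compounds multiplicatively across levels, and where the induction on $c$ kicks in.

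The running time obeys a recursion of roughly the form $T_c(m) = O(m) + T_{c-1}(\phi m) + W(m)$ with carefully balanced $\phi$, solving to $O(m n^{1/(c+1)} \cdot 500^c)$, where $500^c$ absorbs a constant overhead per level of the recursion. The main obstacle I anticipate is the bottom-up pairing: I must show that fewer than $O(|S|)$ intra-cluster edges per cluster escape the pairing at each level, so that across all clusters and all $c$ levels only $O(n)$ extra edges remain in total. Equally delicate is controlling the cycle-length blow-up to be exactly $O(\log n)^c$ rather than a larger polynomial in $\log n$, which requires carefully chosen LDD parameters and a clean accounting of how intra-cluster residuals feed into the recursive call on the contracted/residual graph.
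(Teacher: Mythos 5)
Your high-level plan shares a lot with the paper's — low-diameter decomposition into pieces of diameter $O(\log n)$, a tree in each piece, and $c$ levels of recursion with cycle lengths compounding by a factor of $O(\log n)$ per level — but the load-bearing mechanism is missing. Your primitive of ``pairing non-tree intra-cluster edges via a bottom-up traversal of $T_S$'' does not, as stated, yield edge-disjoint cycles: two non-tree edges joined through a common tree ancestor share tree edges with many other pairings, so the resulting cycles overlap on tree edges (consider a star: every fundamental cycle through a leaf reuses that leaf's unique tree edge). The paper instead first splits each spanning tree into balanced-degree-sum subtrees $K_1,\dots,K_{|K|}$ via {\TreeSplit}, contracts each $K_j$ to a super-vertex to form a graph $H$ on $O(n/k)$ vertices, removes the tree edges from $H$, and then insists on finding \emph{vertex-disjoint} cycles in $H$. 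That vertex-disjointness is the crucial point: distinct cycles in $H$ visit distinct super-vertices, hence distinct subtrees $K_j$, hence the tree paths {\PullUp} uses to lift them back to $G$ are disjoint, and the lifted cycles are vertex-disjoint in $G$. Your pairing scheme has no analogue of this invariant, and without it edge-disjointness of the output is unjustified.

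The second gap is in the recursion itself. You say that for $c>1$ you ``recursively invoke $\mathcal{A}_{c-1}$ on the residual intra-cluster subgraph, and then lift its output cycles \dots\ through $T_S$.'' If the recursive call is on a subgraph of $G$ rather than on the contracted graph, its cycles are already cycles of $G$ and there is nothing to lift: the length does not multiply from $O(\log n)^{c-1}$ to $O(\log n)^{c}$, and the instance passed down does not shrink, so the claimed runtime recurrence cannot close. The paper recurses on the contracted graph $H$ (which, after {\TreeSplit} with threshold $k=\hat n^{1/(c+1)}$, has $O(n/k)$ vertices), so the vertex count drops polynomially per level, and a cycle of length $L$ in $H$ traverses $L$ super-vertices and hence lifts to a cycle of length $O(L\log n)$ in $G$. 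You also need a degree-control step (the paper's {\Sparsify}), because the recursive guarantee at every level is of the form ``find vertex-disjoint cycles covering $\Omega(m/\Delta)$ vertices,'' and an uncontrolled degree blow-up in $H'$ would destroy the bound on iterations per level; and the extra-edge accounting you worry about is packaged cleanly by reducing to a $\Theta(n)$-edge, bounded-degree instance once and for all (Lemma \ref{lemma:reduction}) rather than tracking leftover intra-cluster edges across all clusters and all levels.
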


As immediate consequences the above theorem, we obtain improvements on
several of the results from \cite{ChuGPSSW18}. Throughout, we let $G$
be a graph with $n$ vertices and $m$ edges, and assume that the
algorithms mentioned below are run using our algorithm {\ShortCycleDecomp}
(Algorithm \ref{algo:ShortCycle}) as its
{\CycleDecomposition} algorithm.

We obtain improved degree-preserving sparsifiers by plugging in
Theorem \ref{thm:runtimeandreduction} in~\cite[Theorem
4.1]{ChuGPSSW18}.
\begin{theorem}[Degree-Preserving Sparsification]
  For any integer $c \ge 1$, algorithm {\DegPreserveSparsify} from
  \cite{ChuGPSSW18} returns a graph $H$ with at most
  $n\eps^{-2} \cdot \paren{O(\log n)}^{c+1}$ edges such that with high
  probability all vertices have the same weighted degrees in $G$ and
  $H$, $H$ is an $\eps$-spectral sparsifier of $G.$ The algorithm runs
  in time $\tilde{O}(500^c \cdot m \cdot n^\frac{1}{c+1}).$
\end{theorem}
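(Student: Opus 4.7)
The plan is to invoke Theorem~4.1 of \cite{ChuGPSSW18} in a black-box manner, simply substituting the new short-cycle-decomposition parameters from Theorem~\ref{thm:runtimeandreduction}. Recall that {\DegPreserveSparsify} is parameterized by an abstract {\CycleDecomposition} subroutine producing a $(\hat{k}, L)$-short cycle decomposition in time $T(m,n)$; the bounds on the number of edges, quality of approximation, and running time in \cite[Theorem~4.1]{ChuGPSSW18} are stated as explicit functions of $\hat{k}$, $L$, and $T$. In our setting, Theorem~\ref{thm:runtimeandreduction} gives $\hat{k} = O(n)$, $L = O(\log n)^c$, and $T(m,n) = O(mn^{1/(c+1)} \cdot 500^c)$.

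First I would recall the mechanism behind {\DegPreserveSparsify}: it repeatedly decomposes (a reweighted version of) the current graph into edge-disjoint short cycles, then along each cycle it randomly re-assigns edge weights in a way that exactly preserves every vertex's weighted degree while randomly halving or doubling each edge weight in a zero-mean, bounded-variance fashion. A matrix Bernstein / concentration argument bounds the spectral error accumulated per sparsification round by roughly $L^{1/2}/\sqrt{\text{(samples)}}$, so to achieve an $\eps$-spectral approximation one needs the sparsifier to retain $\tilde{O}(n \eps^{-2} L)$ edges per round; combined with the $\hat{k} = O(n)$ leftover edges per round and the $O(\log n)$ rounds of halving, this yields at most $n\eps^{-2} \cdot O(\log n)^{c+1}$ edges after substituting $L = O(\log n)^c$, which matches the claimed edge bound.

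For the runtime, each call to {\ShortCycleDecomp} on a subgraph with $m'$ edges costs $O(m' \cdot n^{1/(c+1)} \cdot 500^c)$, and summing across the $O(\log n)$ halving rounds the total work is geometric in $m'$ and therefore dominated by the first invocation on the input graph of $m$ edges. All other steps in the framework of \cite[Section~4]{ChuGPSSW18} (cycle weight updates, randomization, bookkeeping) cost only $\tilde{O}(m)$ per round. Hiding polylogarithmic factors in the $\tilde{O}(\cdot)$ gives the stated runtime $\tilde{O}(500^c \cdot m \cdot n^{1/(c+1)})$.

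The only non-mechanical step is verifying that Theorem~4.1 of \cite{ChuGPSSW18} is indeed stated in a modular form whose dependence on $(\hat{k}, L, T)$ can be read off without reopening the matrix-concentration analysis; I expect this to hold essentially verbatim, since Chu~\etal~already designed their framework around an abstract {\CycleDecomposition} primitive, so no additional argument beyond parameter substitution is required.
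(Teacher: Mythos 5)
Your approach matches the paper exactly: the paper gives no separate proof for this theorem, simply stating it as an immediate consequence of substituting the parameters $\hat{k} = O(n)$, $L = O(\log n)^c$, and $T = O(mn^{1/(c+1)}\cdot 500^c)$ from Theorem~\ref{thm:runtimeandreduction} into the modular statement of \cite[Theorem~4.1]{ChuGPSSW18}. Your additional sketch of the internal mechanism of {\DegPreserveSparsify} is a reasonable sanity check but is not part of the paper's (non-)proof.
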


Combining Theorem \ref{thm:runtimeandreduction} with~\cite[Theorem
6.1]{ChuGPSSW18} gives an improved construction of graphical spectral
sketches and resistance sparsifiers.
\begin{theorem}
  For any integer $c \ge 1$, algorithm {\SpectralSketch} from
  \cite{ChuGPSSW18}, given an undirected weighted graph $G$ and
  parameter $\eps$ as inputs, runs in time
  $\tilde{O}(500^c \cdot m \cdot n^\frac{1}{c+1}),$ and returns a
  graph $H$ with
  $\widetilde{O}(n\eps^{-1}) \cdot \paren{O(\log n)}^{c+1}$ edges such
  that with high probability
  \begin{enumerate}
  \item $H$ is an $\eps$-spectral sketch for $G,$ i.e., for any fixed
    vector $\xx$, with high probability
    $\xx^{\top} \LL_H \xx = (1\pm\eps) \xx^{\top} \LL_G \xx.$
  \item $H$ is an $\eps$-resistance sparsifier for $G.$ In fact, for
    any fixed vector $\xx$, with high probability,
    $\xx^{\top} \LL_H^{+} \xx = (1\pm\eps) \xx^{\top} \LL_G^{+} \xx.$
    \footnote{$\LL^{+}$ denotes the Moore-Penrose pseudoinverse of
      $\LL.$ If the eigendecomposition of $\LL$ is
      $\sum_i \lambda_i \vv_i \vv_i^{\top},$ we have
      $\LL^{+} = \sum_{i:\lambda_i > 0} \frac{1}{\lambda_i} \vv_i
      \vv_i^{\top}$}
\end{enumerate}
\end{theorem}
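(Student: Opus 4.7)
The plan is to apply the result as a black-box substitution: Chu~\etal's Theorem~6.1 in~\cite{ChuGPSSW18} analyzes their algorithm \SpectralSketch{} treating the short cycle decomposition routine as a subroutine with generic parameters, and provides bounds on both runtime and edge count that depend parametrically on the cycle length $L$, the number of extra edges $\hat{k}$, and the runtime $T_{\mathrm{cycle}}$ of that subroutine. Since we do not modify \SpectralSketch{} itself, both the correctness statements (the sketch property and the resistance sparsifier property) and the quantitative bounds carry over, and we just need to plug in the new parameters from Theorem~\ref{thm:runtimeandreduction}.

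Concretely, I would first invoke Theorem~\ref{thm:runtimeandreduction} with the given value of $c$ to obtain a short cycle decomposition routine with $L = O(\log n)^c$, $\hat{k} = O(n)$, and runtime $T_{\mathrm{cycle}} = O(m n^{1/(c+1)} \cdot 500^c)$. Substituting these into the edge bound from~\cite[Theorem~6.1]{ChuGPSSW18} (which is of the form $\widetilde{O}(n\eps^{-1}) \cdot L$, up to an extra $\log n$ factor coming from the recursive halving of edge weights in \SpectralSketch{}) yields the claimed $\widetilde{O}(n\eps^{-1}) \cdot (O(\log n))^{c+1}$ edge count. The overall runtime is dominated by the short cycle decomposition calls (again, up to logarithmic factors from the recursion depth), giving $\widetilde{O}(500^c \cdot m \cdot n^{1/(c+1)})$.

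The correctness statements (1) and (2) follow directly from the corresponding items in Chu~\etal's analysis. In particular, property~(1) is the spectral sketch guarantee, which Chu~\etal{} establish for \SpectralSketch{} via a martingale argument over the recursive sparsification steps, and is agnostic to which cycle decomposition routine is used so long as it produces edge-disjoint cycles with $O(n)$ leftover edges. Property~(2) follows from the same analysis applied to $\LL^+$ in place of $\LL$, since the matrix concentration inequalities used by Chu~\etal{} work identically for the pseudoinverse after restricting to the image of $\LL$.

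The only nontrivial point to check is that the $\widetilde{O}$ in the runtime and the $(O(\log n))^{c+1}$ in the edge count absorb all the logarithmic overhead from the outer \SpectralSketch{} algorithm (the recursion depth, the union bounds over failure events, and the $\log n$ factor from decomposing edge weights into $\log W$ scales); this is a routine tracking of constants through the existing proof of~\cite[Theorem~6.1]{ChuGPSSW18} and does not introduce any new ideas. No step of this proof presents a genuine obstacle; the contribution is entirely in the improved cycle decomposition of Theorem~\ref{thm:runtimeandreduction}.
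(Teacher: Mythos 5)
Your proposal matches the paper's approach exactly: the paper does not give a separate proof for this theorem, but simply states that it follows by plugging Theorem~\ref{thm:runtimeandreduction} into \cite[Theorem 6.1]{ChuGPSSW18} as a black-box replacement for the cycle decomposition subroutine. Your parameter tracking ($L = O(\log n)^c$, $\hat{k} = O(n)$, $T_{\mathrm{cycle}} = O(mn^{1/(c+1)}\cdot 500^c)$) and the observation that correctness is agnostic to which cycle decomposition is used are precisely the intended argument.
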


Following the proof of Theorem 3.8 in Chu~\etal~\cite{ChuGPSSW18}
while applying our Theorem \ref{thm:runtimeandreduction} gives an
improved algorithm for estimating the effective resistances between
all pairs.
\begin{theorem}
  Given an undirected graph $G$ with $n$ vertices, $m$ edges, and any
  $t$ vertex pairs and error $\eps > 0$, we can with high probability
  compute $\eps$-approximations to the effective resistances between
  all $t$ of these pairs in time
  $\tilde{O}(m + (n+t)\eps^{-1.5}) \exp(O(\sqrt{\log n \log\log n})).$
\end{theorem}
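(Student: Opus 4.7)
The plan is to invoke the effective-resistance estimation procedure from Theorem 3.8 of Chu~\etal, which takes a short-cycle decomposition subroutine as a black box, and substitute our improved {\ShortCycleDecomp} from Theorem~\ref{thm:runtimeandreduction} in place of theirs. In the Chu~\etal\ framework, if the decomposition produces cycles of length at most $L$ and at most $O(n)$ leftover edges in time $T_{\mathrm{sc}}(m)$, the resulting runtime for estimating all $t$ effective resistances takes the form $\tilde{O}\paren{T_{\mathrm{sc}}(m) + (n+t)\eps^{-1.5} \cdot \poly(L)}$, with the $(n+t)\eps^{-1.5}$ accounting for Johnson--Lindenstrauss projections and Laplacian solves on a graph whose density depends on $L$ only polylogarithmically.

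The next step is to optimize the parameter $c$ in Theorem~\ref{thm:runtimeandreduction}, which gives an $(O(n), O(\log n)^c)$-short cycle decomposition in time $O(m n^{1/(c+1)} \cdot 500^c)$. I would choose $c = \Theta\paren{\sqrt{\log n / \log\log n}}$ so that the three competing factors balance: we get $n^{1/(c+1)} = \exp\paren{O\paren{\sqrt{\log n \log\log n}}}$, $500^c = \exp\paren{O\paren{\sqrt{\log n / \log\log n}}}$ which is absorbed into $\exp(O(\sqrt{\log n \log \log n}))$, and the cycle length $(\log n)^c = \exp\paren{O\paren{\sqrt{\log n \log\log n}}}$. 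Collecting these into the runtime bound above yields the target complexity $\tilde{O}\paren{m + (n+t)\eps^{-1.5}} \cdot \exp\paren{O\paren{\sqrt{\log n \log\log n}}}$.

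The main obstacle is verifying that the cycle-length parameter $L$ enters the $(n+t)\eps^{-1.5}$ portion of the Chu~\etal\ pipeline only through polynomial (or at worst $\poly(L)$) factors, rather than through something like $L^{1/\eps}$ or $n^L$; otherwise the substitution of our $L = \exp(O(\sqrt{\log n \log\log n}))$ would blow the runtime up past the claimed bound. Assuming this mild dependence, which is evident from their proof since the sparsifiers produced from a short-cycle decomposition have $O(n \cdot \mathrm{polylog})$ edges regardless of $L$ up to polylogarithmic factors, the calculation above yields the stated bound. The correctness guarantees (high probability $(1\pm\eps)$-approximation of effective resistances) are inherited verbatim from Chu~\etal\ since our decomposition meets the same interface.
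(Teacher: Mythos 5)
Your proposal is correct and matches the paper's approach exactly: the paper's entire proof is the one-line remark that we follow Chu~\etal's Theorem~3.8 while substituting Theorem~\ref{thm:runtimeandreduction}. You fill in a detail the paper leaves implicit, namely the optimization $c = \Theta(\sqrt{\log n / \log\log n})$, which balances the $n^{1/(c+1)}$ decomposition-time factor against the $(\log n)^c$ cycle-length factor so that both are $\exp(O(\sqrt{\log n \log\log n}))$; this is indeed the right choice for the stated bound.
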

In contrast, Theorem~3.8 from Chu~\etal~\cite{ChuGPSSW18} has a
running time of
$\tilde{O}(m + (n+t)\eps^{-1.5}) \exp(O(\log n)^{\nfrac{3}{4}}).$

Finally, plugging in Theorem \ref{thm:runtimeandreduction} in~\cite[Theorem
5.1]{ChuGPSSW18} allows us to give an algorithm for sparsifying
Eulerian directed graphs. Note that while this result is worse than
the $\Otil{m}$ algorithm given by Cohen~\etal~\cite{CohenKPPRSV17},
this gives the first almost-linear time algorithm (for
$c = \sqrt{\log n}$) for sparsifying Eulerian directed graphs that
does not require expander decompositions.
\begin{theorem}
  For any integer $c \ge 1$, algorithm {\EulerianSparsify} from
  \cite{ChuGPSSW18}, given an Eulerian directed graph $\arrow{G}$ with
  poly bounded edge weights as input, runs in time
  $\tilde{O}(500^c \cdot m \cdot n^\frac{1}{c+1}),$ and returns an
  Eulerian directed graph $\arrow{H}$ with at most
  $n\eps^{-2} \cdot \paren{O(\log n)}^{3c+1}$ edges such that with
  high probability \footnote{For a directed graph $\dir{G}$, its
    directed Laplacian, $\LL_{\dir{G}}$, can be defined as
\[
\LL_{\dir{G}}(u,v)
:=
\begin{cases}
\text{out-degree of $u$}
&
\qquad
\text{if $u = v$,}
\\
-\text{(weight of edge $v \rightarrow u$)}
&
\qquad
\text{if $u \neq v$ and} \\ 
&
\qquad
\text{$v \rightarrow u$ is an edge.}
\end{cases}
\]}
  \[
    \left\|\LL_G^{+/2}\left(\LL_{\arrow{G}} -
        \LL_{\arrow{H}}\right)\LL_G^{+/2} \right\|_2 \le \eps.
  \]
\end{theorem}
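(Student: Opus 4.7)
The plan is to apply Theorem~\ref{thm:runtimeandreduction} as a black-box replacement for the {\CycleDecomposition} subroutine used inside {\EulerianSparsify} of~\cite[Theorem 5.1]{ChuGPSSW18}, and then redo the bookkeeping with the new parameters. First, I would recall the structure of their reduction: {\EulerianSparsify} runs $O(\log n)$ halving rounds, where in each round the current Eulerian graph is fed to {\CycleDecomposition} to extract edge-disjoint cycles of length at most $L$, and each cycle is then resampled to halve its edge count while approximately preserving the directed Laplacian in the $\LL_G^{+/2}(\cdot)\LL_G^{+/2}$-norm. In~\cite{ChuGPSSW18} the final edge count is shown to grow polynomially in $L$, with an exponent that, once multiplied by the $O(\log n)$ recursion depth, gives the exponent $3c+1$ in the $\log n$ factor; the running time is dominated by the calls to {\CycleDecomposition}.

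Second, I would plug in Theorem~\ref{thm:runtimeandreduction}, which guarantees $\hat{k} = O(n)$, $L = O(\log n)^c$, and per-invocation time $O(500^c \cdot m \cdot n^{1/(c+1)})$. Substituting $L = O(\log n)^c$ into the polynomial edge-count bound of~\cite[Theorem 5.1]{ChuGPSSW18} yields $n\eps^{-2} \cdot (O(\log n))^{3c+1}$ edges, while summing the decomposition time over the $O(\log n)$ halving rounds (each invoked on a graph with $\tilde{O}(m)$ edges, since $\hat{k} = O(n)$ is absorbed into the halving) produces the claimed $\tilde{O}(500^c \cdot m \cdot n^{1/(c+1)})$ running time. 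All additional polylogarithmic factors coming from the halving schedule are absorbed into the $\tilde{O}(\cdot)$ notation.

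Third, the spectral guarantee $\|\LL_G^{+/2}(\LL_{\arrow{G}} - \LL_{\arrow{H}})\LL_G^{+/2}\|_2 \le \eps$ is inherited verbatim from~\cite{ChuGPSSW18}: the matrix-martingale concentration argument used to control the Laplacian error invokes the cycle decomposition only through its output guarantee (edge-disjoint cycles of length at most $L$, plus at most $O(n)$ leftover edges), and makes no reference to its internal workings. Swapping in our {\ShortCycleDecomp} for the cycle decomposition of~\cite{ChuGPSSW18} therefore changes only the quantitative dependence on $L$ and $\hat{k}$, not the validity of the concentration bound itself.

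The main piece of ``real work,'' which I expect to be more routine than difficult, is verifying the precise polynomial exponent of $L$ in the edge-count bound of~\cite[Theorem 5.1]{ChuGPSSW18}; this is what determines whether the final exponent is $3c+1$ rather than $2c+1$ or $4c+1$. I would handle this by re-tracing the variance accounting in their proof with $L$ left symbolic instead of instantiated to the specific value used in~\cite{ChuGPSSW18}, so that the substitution $L = O(\log n)^c$ can be made cleanly. No new technical ideas are required beyond careful parameter tracking.
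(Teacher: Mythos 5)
Your proposal matches the paper's own treatment exactly: the paper states this theorem as an immediate consequence of plugging Theorem~\ref{thm:runtimeandreduction} into~\cite[Theorem 5.1]{ChuGPSSW18}, and gives no further proof. Your plan — treat the new short cycle decomposition as a drop-in replacement, substitute $L = O(\log n)^c$ and $\hat{k} = O(n)$ into their edge-count and running-time bounds, and note that the matrix-concentration argument for the directed-Laplacian error only sees the cycle decomposition through its output interface — is precisely the intended argument, and your acknowledgment that the exponent $3c+1$ (versus $c+1$ for the undirected applications) must be confirmed by re-tracing the $L$-dependence of the variance accounting in~\cite{ChuGPSSW18} is the right place to flag the remaining bookkeeping.
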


\paragraph{Comparison to the work of Chu~\etal~\cite{ChuGPSSW18}.}

Setting $c = 1$ in Theorem \ref{thm:runtimeandreduction} gives an
algorithm that finds an $(O(n), O(\log n))$-short cycle in time
$O(m\sqrt{n})$ (in comparison with $O(mn)$ time for such a
decomposition in~\cite{ChuGPSSW18}).  Setting
$c = \nfrac{1}{\delta} - 1,$ for $\delta \in (0,\nfrac{1}{2}]$ gives
an us an $O(mn^\delta)$ time algorithm that finds an
$(O(n), O(\log n)^{\nfrac{1}{\delta}-1})$-short cycle
decomposition. On the other hand, the approach of
Chu~\etal~\cite{ChuGPSSW18} can only achieve sub-quadratic time if
their cycles are length at least $\exp(\sqrt{\log n\log\log n})$ (see paragraph
below for discussion).  Setting $c = \sqrt{\log n}$ in
Theorem~\ref{thm:runtimeandreduction}, we obtain an algorithm that
runs in $m \cdot \exp\left(O(\sqrt{\log
    n})\right)$
time, and finds a $(O(n), \exp(O(\sqrt{\log n}\log\log n)))$-short
cycle decomposition of the
graph.
This beats the algorithms from Chu~{\etal} in terms of all parameters:
runtime, cycle length, and extra edges. Note that these improvements
carry over immediately to all applications.

Moreover, our algorithm is considerably simpler than that of
\cite{ChuGPSSW18}. The algorithm in \cite{ChuGPSSW18} requires a
strong version of an expander decomposition instead of the more
standard expander decomposition algorithm of Spielman and
Teng~\cite{ST11b}. Instead of each piece of the decomposition being
contained in expanders, they need for each piece itself to be an
expander. This requires a stronger expander decomposition which is
given in work of Nanongkai and Saranurak \cite{NS17}. This immediately
gives an overhead of $\exp\paren{O(\sqrt{\log n \log \log n})}$ on the
lengths of cycles produced by their algorithm, even if the recursion
depth is set to some small integer constant. Our algorithm bypasses
this by using only low diameter decomposition~\cite{Bartal96}, which
allows us to generate cycles of length $\paren{O(\log n)}^{c}$ for any
constant $c \ge 1.$

\paragraph{Discussion of the work of Parter and Yogev \cite{PY17}.}
Parter and Yogev~\cite{PY17} study a closely related notion of a
\emph{low-congestion cycle cover} -- a collection of short cycles that
covers all the edges of a graph, and where each edge appears only in a
small number of cycles. An efficient construction of a low congestion
cycle cover would immediately imply an efficient algorithm for short
cycle decomposition. The methods and results presented in the paper
are very interesting. However, to be best of our knowledge, the
algorithms described in their paper only lend themselves to quadratic
time implementations.

\section{Preliminaries}

Throughout we work with undirected unweighted multigraphs, allowing for multiple edges and self-loops. We say that self-loops add degree $2$ to the vertex it is attached to. \\

\noindent In this work, we often work with vertex disjoint cycles instead of edge disjoint cycles.
\begin{definition}
A \emph{vertex disjoint} short cycle decomposition is a short cycle decomposition where no two of the cycles share a vertex.
\end{definition}

\noindent For a graph $G$, let $V(G)$ and $E(G)$ denote the vertex and edge sets of $G$. For a subgraph $S \subseteq G$, define $V(S)$ to be the set of vertices of $S$, and $E(S)$ the set of edges. Generally when the graph $G$ is clear from context, we let $n$ and $m$ denote $|V(G)|$ and $|E(G)|$ respectively. \\

\noindent For a graph $G$, let $\Delta \defeq \Delta(G)$ denote the
maximum degree of the graph $G$. \\

\noindent For a subgraph $G' \subseteq G$ (possibly with $V(G') \neq V(G)$), let the (strong) diameter of $G'$ be the maximum distance between
two vertices in $V(G')$ using only the edges in $E(G')$. \\

\noindent For disjoint subsets of vertices $A_1, A_2, \dots, A_k$ of a graph, let $E(A_1, \dots, A_k)$ denote the set of edges in with endpoints in different $A_i.$ \\

\subsection{Contraction.}
\label{sec:contraction}

In this section, we discuss contraction of components in a graph, which plays a major role in our algorithms.

Let $G$ be a graph with $n$ vertices and $m$ edges, and let $A_1, A_2, \dots, A_k$ be a partition of its vertices into disjoint components.
Define the \emph{contraction} of the components $A_1, A_2, \dots, A_k$ to be the following graph, which we call $H$. $H$ has $k$ vertices numbered $1, 2, \dots, k$, where vertex $i$ corresponds to component $A_i$ in $G$. Now, for each edge $uv \in E(G)$, if $u \in A_{u'}$ and $v \in A_{v'}$, add edge $u'v'$ to $H$. There is a clear bijection between the edges of $G$ and the edges of $H$, hence $H$ has $m$ edges too.

Define the \emph{edge injection} $f: E(H) \to E(G)$ to be the function
naturally obtained from the bijection described above.

\section{Reduction to Sparse, Approximately Regular Graphs}
\label{sec:reduc}

In this section, we demonstrate that it suffices to provide algorithms for graphs $G$ which are both sparse (i.e. $m = O(n)$) and have bounded degree ($\Delta(G) = O(1)$).
\begin{lemma}
\label{lemma:graphreduce} {\GraphReduce} (Algorithm \ref{algo:graphreduce})
runs in $O(m+n)$ time and takes any
graph $G$ with $n$ vertices and $m \ge n$ edges, and returns a graph
$H$, such that:
\begin{enumerate}
\item $H$ has at most $2n$ vertices and exactly $m$ edges.
\item $\Delta(H) \le \ceil{\frac{2m}{n}}$.
\item A $(\hat{k}, L)$-short cycle decomposition of $H$ can be mapped
  in $O(m+n)$ time to a $(\hat{k}, L)$-short cycle decomposition of $G$.
\end{enumerate}
\end{lemma}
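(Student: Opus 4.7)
The plan is to have $\GraphReduce$ \emph{split} each vertex of $G$ whose degree exceeds the threshold $d := \lceil 2m/n \rceil$ into several lower-degree copies. Concretely, for each vertex $v$ of degree $d_v \ge 1$, create $\lceil d_v / d \rceil$ copies and pack the half-edges incident to $v$ into groups of size at most $d$ (a self-loop at $v$ is treated as two half-edges at $v$; if the two half-edges end up in different groups, the self-loop becomes an ordinary edge between the corresponding copies in $H$). This construction gives a natural edge bijection $f\colon E(H)\to E(G)$.

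Properties 1 and 2 then follow directly. No copy absorbs more than $d$ half-edges, so $\Delta(H)\le \lceil 2m/n \rceil$, and $|E(H)|=m$ by the bijection. The total number of copies is
\[
\sum_{v : d_v \ge 1} \left\lceil \frac{d_v}{d} \right\rceil
\;\le\; \sum_v \left(\frac{d_v}{d} + 1\right)
\;=\; \frac{2m}{d} + n
\;\le\; 2n,
\]
using $d \ge 2m/n$ in the final step. For property 3, given any $(\hat{k},L)$-short cycle decomposition of $H$, apply $f$ edgewise. Consecutive edges of a cycle $C \subseteq H$ meet at a copy $u^{(j)}$, so their images meet at $u$ in $G$; hence $f(C)$ is a closed walk in $G$ of length $|C|\le L$ whose edges are pairwise distinct. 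Such a walk decomposes into edge-disjoint simple cycles of $G$ by repeatedly cutting off a simple subcycle whenever a vertex is revisited, and each resulting simple subcycle has length at most $L$. Cycles coming from distinct $C$'s remain edge-disjoint by the bijection, and the $\hat{k}$ extra edges of $H$ lift to $\hat{k}$ extra edges of $G$, so the parameters $(\hat{k},L)$ are preserved.

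Both building $H$ and applying $f^{-1}$ plus the walk-splitting step read the adjacency lists once and do constant work per edge, yielding the claimed $O(m+n)$ running time. The step that requires the most care is the cycle-lifting: because distinct copies of $v$ in $H$ collapse to the same vertex in $G$, a simple cycle in $H$ can lift to a non-simple closed walk in $G$, and one must check that splitting the walk at repeated vertices still yields simple cycles whose lengths are bounded by $L$ and which remain edge-disjoint across different $C$'s. This is the only non-mechanical ingredient; it works because each subcycle produced by the split is a contiguous sub-walk of a walk of length at most $L$, and edge-disjointness is inherited from the edge-disjointness of $E(C_1),E(C_2) \subseteq E(H)$ under the bijection.
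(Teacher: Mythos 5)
Your proof is correct and takes essentially the same approach as the paper: split each high-degree vertex $v$ into $\lceil \deg(v)/\lceil 2m/n\rceil\rceil$ low-degree copies, bound $|V(H)|$ via $\lceil x\rceil \le x+1$, and map cycles of $H$ back to closed walks of $G$ that are then split at repeated vertices into simple cycles of the same total length. The only cosmetic difference is that you spell out the self-loop bookkeeping (two half-edges, possibly separating into an edge between two copies), which the paper leaves implicit.
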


\begin{proof}
  Consider the {\GraphReduce} (Algorithm \ref{algo:graphreduce}). It
  can be implemented to run in $O(m+n)$ time.

\begin{algorithm}[h]
  \caption{\GraphReduce, takes a graph $G$ with $n$ vertices,
    $m \ge n$ edges. Returns a graph $H$ with mapping of vertices from
    $H$ to $G$ where $\Delta(H) \le \ceil{\frac{2m}{n}}$.}
\begin{algorithmic}[1]
\State $D \assign \ceil{\frac{2m}{n}}$
\State Initialize $H$ to be the same as graph $G$
\For{vertex $v' \in V(G)$}
  \State Let $v$ be the corresponding vertex in $V(H)$
	\State $t \assign \left\lceil\frac{\deg(v)}{D}\right\rceil$
	\State Let the neighbors of $v$ in $H$ be $u_1, u_2, \dots, u_{\deg(v)}$
  \State Delete $v$ from $H$
	\For{$i=1$ to $t$}
		\State Add a new vertex to $H$ connected to $u_{1+(i-1)D}, u_{2+(i-1)D}, \dots, u_{\min(\deg(v), iD)}$
	\EndFor
\EndFor
\State \Return $H$ and the vertex mapping (which vertices of $H$ come from which vertices in $G$)
\end{algorithmic}
\label{algo:graphreduce}
\end{algorithm}

Clearly when the algorithm ends, all vertices in $H$ have degree at
most $\frac{2m}{n}$, as $\deg(v)-(t[v]-1)D \le D$. Also
notice that the number of vertices in $H$ is
\begin{align*}
  \sum_v t[v] &= \sum_v \left\lceil\frac{\deg(v)}{D} \right\rceil\\ &\le n
  + \sum_v \frac{\deg(v)}{D}\\ &= n + \frac{2m}{D}\\ &\le n +
  \frac{2m}{2m/n} = 2n.
\end{align*}
  So $H$ has at most $2n$ vertices as
desired.  There is a natural mapping from the vertices and edges of
$H$ to the vertices and edges of $G$ respectively. This mapping allows
us to map each cycle in $H$ to a circuit in $G$ with identical
length. Note that this circuit might now visit vertices more than
once, but we can split a circuit into cycles with the same total length
in time linear in the length of the cycle. This allows us to
efficiently map a $(\hat{k}, L)$-short cycle decomposition of $H$ to a
$(\hat{k}, L)$-short cycle decomposition of $G$. When we split the
vertices, any edge is visited exactly twice, so the algorithm takes
$O(m + n)$ time.
\end{proof}

We are now ready to state our main reduction result.

\begin{lemma}
\label{lemma:reduction}
Assume that an Algorithm $A$ takes as input graphs $G$ with $n$
vertices, $m = 10n$ edges, and maximum degree $\Delta$, and returns
vertex disjoint cycles of length $L(n)$ containing at least
$\Omega\left(\frac{n}{\Delta}\right)$ total vertices, in time $T(n).$

Then, we can construct another Algorithm $B$ that takes as input a
graph $G$ with $n$ vertices and $m$ edges, and outputs a
$(20n, L(n))$-short cycle decomposition and runs in time
$O\left(\frac{m \cdot T(n)}{n} \right).$
\end{lemma}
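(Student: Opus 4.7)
The plan is to construct Algorithm $B$ as an iterative peeling procedure: maintain a remaining graph $G'$ initialized to $G$, and repeatedly extract $\Omega(n)$ edge-disjoint cycles from $G'$ using a single call to $A$, removing them from $G'$ before the next iteration. Terminate once $|E(G')| \le 20n$, at which point the remaining edges become the $\le 20n$ extra edges of the short cycle decomposition. Since each iteration removes $\Omega(n)$ edges, the loop runs at most $O(m/n)$ times; if each iteration costs $O(T(n))$ (absorbing an $O(n)$ overhead for preprocessing), the total runtime matches $O(m \cdot T(n)/n)$.

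Each iteration works as follows. Select any subgraph $S \subseteq G'$ with exactly $10n$ edges and apply $\GraphReduce$ (Lemma \ref{lemma:graphreduce}) to $S$, obtaining a graph $H$ with at most $2n$ vertices, exactly $10n$ edges, and maximum degree $\Delta(H) \le 20$. Feed $H$ into Algorithm $A$, which returns vertex-disjoint cycles of length at most $L(n)$ covering $\Omega(|V(H)|/\Delta(H)) = \Omega(n)$ total vertices --- equivalently, $\Omega(n)$ edges of $H$, since a simple cycle has equal numbers of vertices and edges. Because cycles that are vertex-disjoint in $H$ are in particular edge-disjoint, the edge injection $f : E(H) \to E(S)$ from $\GraphReduce$ carries them to edge-disjoint closed walks of length $\le L(n)$ in $S \subseteq G'$. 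Each such closed walk may visit a vertex of $G'$ multiple times (since $\GraphReduce$ splits high-degree vertices), but can be decomposed into simple cycles of length $\le L(n)$ in time linear in the walk length by splitting at repeated vertices. Remove all resulting cycle edges from $G'$.

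Correctness is then direct: cycles found within a single iteration are edge-disjoint by the vertex-disjointness in $H$, and cycles produced across different iterations are edge-disjoint because each iteration deletes its extracted edges from $G'$; every cycle has length $\le L(n)$, and at termination the remaining $\le 20n$ edges of $G'$ furnish the allowed extras. The main technical subtlety lies in reconciling Algorithm $A$'s precise input specification ($m = 10n'$ on $n'$ vertices) with the output of $\GraphReduce$, whose vertex count $n_H$ may not equal $n$ exactly: in general $n_H$ lies between the number of non-isolated vertices of $S$ and $2|V(S)| \le 2n$. The natural fix is to pad $H$ with isolated vertices so that its vertex count matches $m_H/10$, or to select $S$'s edges so that $|V(S)|$ is $\Theta(n)$; either way, $A$'s guarantee $\Omega(n_H/\Delta(H)) = \Omega(n)$ on the cycle-vertex count continues to hold, after which the rest of the analysis is routine bookkeeping of iteration count and per-iteration cost.
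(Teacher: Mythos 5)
Your overall plan is the same as the paper's: iteratively peel off a bounded-size subgraph, regularize it with \GraphReduce, run Algorithm $A$, pull the cycles back to $G$ (decomposing closed walks into simple cycles at repeated vertices), and repeat $O(m/n)$ times. You also correctly identify the one genuinely delicate point, namely matching the output of \GraphReduce to Algorithm $A$'s rigid requirement $m = 10n$. Unfortunately, your proposed fix does not work, and the parameter you chose is exactly the wrong one.

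Concretely: you take $S$ with exactly $10n$ edges, so $\GraphReduce(S)$ yields $H$ with $m_H = 10n$ edges and $n_H \le 2n$ vertices. To satisfy $m_H = 10\,n_H$ you would need $n_H = n$. But \GraphReduce splits high-degree vertices and only guarantees the \emph{upper} bound $n_H \le 2n$; it gives no upper bound of $n$, and in fact $n_H$ is at least the number of non-isolated vertices of $S$, which can easily exceed $n$. Padding with isolated vertices only increases $n_H$, so ``pad until $n_H = m_H/10$'' is impossible whenever $n_H > n$ already. Choosing $S$ so that $|V(S)| = \Theta(n)$ does not help either, since the mismatch is between $m_H$ and $n_H$ \emph{after} \GraphReduce, and $\Theta$-bounds do not deliver the exact equality $m_H = 10\,n_H$ that $A$'s interface demands. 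The paper resolves this by taking $S$ with $20n$ edges (on $n$ vertices), so \GraphReduce returns $H$ with $20n$ edges, at most $2n$ vertices, and $\Delta(H) \le 40$; padding \emph{up} to exactly $2n$ vertices then gives $m_H = 20n = 10 \cdot n_H$ precisely, and $A$'s guarantee yields $\Omega(n_H/\Delta(H)) = \Omega(n)$ cycle-vertices as you want. With that change of constant your argument goes through and coincides with the paper's.
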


\begin{proof}
Algorithm $B$ operates as described in Algorithm \ref{algo:algob}.

\begin{algorithm}[h]
\caption{Algorithm $B$, takes a graph $G$ with $n$ vertices and $m$ edges and outputs a $(20n, L(n))$-short cycle decomposition}
\begin{algorithmic}[1]
  \State Let $C \assign \emptyset$ (the set of cycles we've found).
  \While {$G$ has more than $20n$ edges}
\State Let $G'$ be any subgraph of $G$ with exactly $20n$ edges.
\State Let $H \assign$ \GraphReduce$(G')$.
\State Add isolated vertices to $H$ until $H$ has exactly $2n$ vertices.
\State Let $C'$ be the set of cycles on $H$ we get when we run Algorithm $A$ on $H$.
\State Let $C''$ be the set of cycles on $G$ corresponding to $C'$.
\State Delete the edges of cycles in $C''$ from $G$.
\State Let $C \assign C \cup C''.$
\EndWhile
\State \Return $C$.
\end{algorithmic}
\label{algo:algob}
\end{algorithm}

Note that because each $G'$ has $20n$ edges and $n$ vertices, we know
that $\Delta(H) \le 40$. Additionally, graph $H$ will have exactly
$2n$ vertices and $20n$ edges. Therefore, by the conditions on
Algorithm $A$ stated, we know that the cycles of $C'$ contain at least
$\Omega\left(\frac{n}{40}\right) = \Omega(n)$ edges of $H$. Therefore,
the cycles of $C''$ also contain at least $\Omega(n)$ edges. When the
algorithm terminates, $G$ has less than $20n$ edges remaining, so it
is clear that Algorithm $B$ returns an $(20n, L(n))$-short cycle
decomposition. As each iteration of Algorithm $B$ removes cycles
containing at least $\Omega(n)$ edges, it repeats at most
$O\left(\frac{m}{n}\right)$ times, for a total runtime of
$O\left(\frac{m \cdot T(n)}{n}\right)$ as Algorithm $A$ takes $T(n)$
time and all other processing takes $O(n)$ time per iteration.
\end{proof}

The above reduction allows us to work with bounded degree graphs.
Our main algorithms {\ImprovedShortCycle} (Algorithm
\ref{algo:ImprovedShortCycle}) and {\ShortCycleDecomp} (Algorithm
\ref{algo:ShortCycle}) both satisfy the conditions of Lemma
\ref{lemma:reduction}, and assume that the input graph satisfies
$m = 10n$.


\section{Improved Naive Cycle Decomposition}
\label{sec:nrtn}

In this section we present an improvement on {\NaiveShortCycle} by
giving an algorithm (Algorithm \ref{algo:ImprovedShortCycle})
that when given a graph $G$ with $n$ vertices and
$m = 10n$ vertices, returns vertex disjoint cycles of length $O(\log n)$
containing at least $\frac{m}{10\Delta}$ vertices in total.
It runs in time $O(m\sqrt{n}).$

Before stating the algorithm, we state several of the subalgorithms which we have described in Section \ref{sec:overview}. We defer the proofs to the appendix.

The first is a vertex disjoint version of {\NaiveCycleDecomposition} from \cite{ChuGPSSW18}.
\begin{restatable}{lemma}{restatenaivecycle}
\label{naivecycle}
{\NaiveShortCycle} (Algorithm \ref{algo:NaiveShortCycle}) takes a graph $G$ with $n$
vertices, $m$ edges, and maximum degree $\Delta$, and outputs vertex
disjoint cycles of length at most $2 \log n$ containing at least
$\frac{m-2n}{\Delta}$ total vertices. It runs in $O(m + n^{2})$ time.
\end{restatable}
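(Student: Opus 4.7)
The plan is the standard peel-and-BFS construction: repeatedly strip away vertices of current degree $\le 2$ until the remaining graph has minimum degree $\ge 3$, then locate a short cycle via BFS from an arbitrary surviving vertex, remove those cycle vertices from the working graph, and iterate until no edges remain. Since each cycle's vertices are deleted before the next iteration begins, the reported cycles are automatically vertex-disjoint.

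The structural heart of the argument is that any graph on at most $n$ vertices with minimum degree $\ge 3$ contains a cycle of length at most $2 \log n$ that can be found by BFS from any starting vertex $v$. In a BFS tree rooted at $v$, the root has $\ge 3$ children and every deeper internal vertex has $\ge 2$ non-parent neighbours in the graph, so if no non-tree edge were encountered within depth $d$ then all of those non-parent neighbours would be enrolled as fresh tree children, forcing the visited set to exceed $n$ vertices by level $d = \log n$. Hence some non-tree edge $\{x,y\}$ must appear by level $\log n$, and concatenating the two tree paths from $x$ and $y$ up to their nearest common ancestor closes a cycle of length at most $2 \log n$.

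For the vertex-count guarantee I account for where every edge of $G$ ends up. At termination each edge has been destroyed either because it was incident to a peeled vertex at the instant of its removal, or because one of its endpoints was deleted as part of some reported cycle. A peeled vertex has degree $\le 2$ when peeled, and at most $n$ vertices are peeled over the whole execution, so peeling kills at most $2n$ edges in all. Every other edge is destroyed when one of its endpoints is removed as part of a cycle, and each such removal kills at most $\Delta$ edges; writing $V_C$ for the total number of vertices across all reported cycles, this yields $m \le 2n + \Delta \cdot V_C$ and hence $V_C \ge (m - 2n)/\Delta$, as required.

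For the runtime, peeling can be implemented with a queue of vertices whose current degree has dropped to $\le 2$, touching each vertex and each incident edge only a constant number of times across the whole execution and contributing $O(m+n)$ overall. Each BFS call halts by depth $\log n$ (by the structural claim above) and so visits at most $n$ vertices before terminating, and there are $O(n)$ such calls in total since every cycle destroys at least three vertices. The main subtlety I expect in writing the runtime up carefully is bounding the BFS edge-scan work across all calls to $O(n^2)$: a naive per-call bound of $O(n+m)$ would give $O(nm + n^2)$, so the accounting must exploit that each BFS visits at most $n$ vertices and charge their edge scans across iterations to yield the stated $O(m + n^2)$ bound.
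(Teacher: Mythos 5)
Your proposal reproduces the paper's proof essentially step for step: the same peel-to-minimum-degree-3 plus BFS construction, the same exponential-growth argument forcing a non-tree edge by depth $\log n$ and hence a cycle of length at most $2\log n$, the same edge-accounting $m \le 2n + \Delta\cdot V_C$ for the vertex count, and the same $O(n)$-per-BFS, $O(n)$-calls accounting for the $O(m+n^2)$ runtime. The runtime subtlety you flag resolves exactly as the paper indicates: because each BFS halts at the first non-tree edge it ever scans, it examines only tree edges plus one extra edge, hence $O(n)$ edge scans per call.
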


We will need low-diameter decompositions. They were introduced by
Bartal~\cite{Bartal96}. We use the following version from the work of
Miller~\etal~\cite{MillerPX13}.
\begin{restatable}[Theorem 1.2 from \cite{MillerPX13}]{theorem}{restatelowdiamdecomp}
\label{lowdiamdecomp}
There is an algorithm {\LowDiamDecomp}$(G, \beta)$ that takes a graph with $G$ and
a parameter $\beta$ and with high probability returns a set $R$ of
edges of $G$ of size $\beta m$ such that all connected components of
$G \backslash R$ have diameter $O(\beta^{-1} \log n).$ The algorithm
{\LowDiamDecomp} runs in time $O(m).$
\end{restatable}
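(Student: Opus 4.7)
The plan is to implement the Miller--Peng--Xu exponential-shift decomposition. Draw an independent shift $\delta_v \sim \mathrm{Exp}(\beta)$ for each vertex $v$, and assign every vertex $u$ to the center $v$ minimizing $d_G(u,v) - \delta_v$; call the resulting cluster $A_v$. Return $R$ as the set of edges whose two endpoints lie in different clusters. Three things need verification: (i) $|R| \le \beta m$ with high probability, (ii) each $A_v$ has strong diameter $O(\beta^{-1} \log n)$ with high probability, and (iii) the whole procedure can be implemented in $O(m)$ time.

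For (i), I would fix an edge $(u,w)$ and, using the memorylessness of the exponential distribution together with a careful comparison of the assignment rule at $u$ versus $w$, bound the probability that $u$ and $w$ land in different clusters by $O(\beta)$. Summed over $m$ edges, this gives $\mathbb{E}[|R|] = O(\beta m)$; following the high-probability analysis of \cite{MillerPX13}, one then rescales $\beta$ by a constant and invokes concentration to obtain the claimed almost-sure bound.

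For (ii), since the shifts are exponentially distributed with rate $\beta$, a union bound yields $\max_v \delta_v \le O(\beta^{-1} \log n)$ with high probability. On this event, if $u \in A_v$ then comparing against the hypothetical assignment $u \mapsto u$ gives $d_G(u,v) - \delta_v \le d_G(u,u) - \delta_u = -\delta_u \le 0$, hence $d_G(u,v) \le \delta_v$. A short triangle-inequality argument further shows that every vertex on a shortest $u$-to-$v$ path is itself assigned to $v$, so this path lies \emph{inside} $A_v$; applying the triangle inequality within $A_v$ then yields a strong diameter bound of $2\max_v \delta_v = O(\beta^{-1} \log n)$.

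For (iii), a direct implementation is a multi-source Dijkstra with initial key $-\delta_v$ at each source; the linear-time sweep of MPX processes vertices in order of increasing $\delta_v$ and avoids the logarithmic factor. The main obstacle I expect is consolidating (i) and (ii) into a single high-probability guarantee, particularly the high-probability version of the edge-cut bound (which is only immediate in expectation). This is where I would lean on the precise analysis in \cite{MillerPX13} rather than redoing it from scratch.
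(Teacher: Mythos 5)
The paper does not give its own proof here; the statement is cited verbatim as Theorem~1.2 of Miller, Peng, and Xu~\cite{MillerPX13}, and your sketch is a faithful reconstruction of exactly that algorithm and analysis (exponential shifts $\delta_v\sim\mathrm{Exp}(\beta)$, assignment to the shifted-nearest center, the $2\max_v\delta_v = O(\beta^{-1}\log n)$ strong-diameter bound via shortest paths staying in-cluster, and the $O(\beta)$ per-edge cut probability). Your flagged concern about lifting $\mathbb{E}[|R|]=O(\beta m)$ to a high-probability bound is precisely the part the paper also delegates to~\cite{MillerPX13}, so the proposal and the paper take the same route.
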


Additionally, we need a simple routine to essentially ``pull back" a short cycle decomposition
on a contracted graph back to the base graph.
\begin{restatable}{lemma}{restatepullup}
\label{pullup}The algorithm
\PullUp$(H', C'$, \\$\{K_i\}_{i=1}^{|V(H')|}$, $\{S_i\}_{i=1}^{|V(H')|}$, $f)$ (Algorithm \ref{algo:PullUp}) takes the following inputs:
\begin{enumerate}
\item $H'$-- a graph.
\item $C'$-- a set of vertex disjoint cycles on the vertices of $H'$.
\item $\{K_i\}_{i=1}^{|V(H')|}$-- A partition of the vertices of a graph $G$, where $K_i$ corresponds to vertex $i$ of graph $H'$.
\item $\{S_i\}_{i=1}^{|V(H')|}$-- Each $S_i$ is a spanning tree on the vertices in $K_i$.
\item $f$-- This is a function $f: E(H') \to E(G)$ such that for an edge $uv \in E(H')$, we have that $f(uv) \in E(K_u, K_v)$.
\end{enumerate}
It returns a set $C$  of cycles on the vertices of $G$ such that
\begin{enumerate}
\item The cycles in $C$ are vertex disjoint.
\item The cycles in $C$ cover at least as many vertices as those in $C'$ did.
\item The length in $C$ have maximum length at most $O(\max_i \diam(S_i))$ times the longest cycle in $C'$.
\end{enumerate}
It runs in time $O(n).$
\end{restatable}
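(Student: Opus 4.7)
The plan is to construct the output cycles one at a time: each cycle in $C'$ is lifted independently to a cycle in $G$ by stitching tree paths inside the $K_i$'s with the edges produced by $f$. Concretely, for a cycle $v_1 v_2 \cdots v_k v_1$ in $C'$, let $f(v_i v_{i+1}) = (x_i, y_i)$ with $x_i \in K_{v_i}$ and $y_i \in K_{v_{i+1}}$ (indices mod $k$). Inside each $K_{v_i}$ the lifted cycle should enter at $y_{i-1}$ and leave at $x_i$, so I would walk between these two vertices along the unique path in the spanning tree $S_{v_i}$. Concatenating these tree paths with the bridging edges $(x_i, y_i)$ gives a closed walk in $G$ that \PullUp{} will output as a cycle.

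The first thing I would verify is that this closed walk is in fact a simple cycle. Because the cycle $v_1 \cdots v_k$ in $C'$ is simple, the components $K_{v_1}, \dots, K_{v_k}$ are pairwise disjoint; within each $K_{v_i}$ the walk is a simple tree path; and consecutive segments meet only through the bridging edge. Thus no vertex of $G$ is visited twice. Vertex-disjointness across different cycles (property 1) is then immediate: distinct cycles in $C'$ use disjoint index sets, hence disjoint unions $\bigcup_i K_{v_i}$.

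For property 2, the lifted cycle has length at least $k$ because it contains at least one vertex from each of the $k$ pairwise disjoint sets $K_{v_i}$, so it covers at least as many vertices as the original cycle in $C'$. For property 3, the tree path inside $K_{v_i}$ has length at most $\diam(S_{v_i}) \le \max_j \diam(S_j)$, and together with the $k$ bridging edges the total length is at most $k \cdot (1 + \max_j \diam(S_j)) = O(\max_j \diam(S_j))$ times the length of the source cycle, as required.

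For the runtime, since the output cycles are vertex-disjoint, their total length is $O(n)$, and that dominates the work: looking up $f$ on each edge of $C'$ is constant per edge, and extracting each tree path $y_{i-1} \to x_i$ in $S_{v_i}$ can be done in time proportional to the path length using precomputed parent pointers in each spanning tree (a one-time $O(n)$ preprocessing). Overall this gives the claimed $O(n)$ runtime. The main subtle point in this proof is really just establishing simplicity of the lifted cycle; everything else is bookkeeping, and the disjointness of the components $K_{v_i}$ drives both properties 1 and the simplicity check.
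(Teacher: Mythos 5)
Your construction matches the paper's Algorithm \PullUp{} exactly (your $x_i, y_i$ correspond to their $a_i, b_{i+1}$, with the tree path inside $S_{v_i}$ linking the entry and exit points), and your verification of vertex-disjointness, coverage, length blowup, and runtime simply spells out the details that the paper's proof waves at with ``all guarantees follow very easily from the description.'' The approach and the conclusions are the same; yours is just a more explicit write-up.
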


Finally, we need an algorithm that splits a tree into smaller subtrees.
We will use this in order to split connected components in a graph
into smaller connected components that are all approximately equal sized.
\begin{restatable}{lemma}{restatetreesplit}
\label{treesplit}
Let $T$ be a tree with $n$ vertices maximum degree $D$. Assume the vertices are
labelled with nonnegative integers between $0$ and $X$ (denote labels as
$c_v$ for $v \in V(T)$). Then the algorithm {\TreeSplit}$(n, t, T, X, \{c_v\}_{v\in V(T)})$ (Algorithm
\ref{algo:treesplit}) is an $O(n)$ time algorithm that when given $T$
and a positive integer $t \le \sum_v c_v$, splits the tree into
connected subgraphs, each of which has sum of labels between $t$ and
$Dt + X$.
\end{restatable}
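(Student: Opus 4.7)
The plan is a post-order DFS with a cut-when-heavy rule. Root $T$ at an arbitrary vertex $r$ and process vertices in post-order. Maintain at each vertex a pending sum
\[
  s(v) \defeq c_v + \sum_{u}\; s(u),
\]
where the summation is over the children $u$ of $v$ in the rooted tree; cut-off children contribute $0$ since their $s$ is reset on cut. The cut rule: immediately after computing $s(v)$ for a non-root $v$, if $s(v) \ge t$, declare the pending subtree at $v$ (the subtree of $v$ minus any previously cut-off subtrees) as a finished component, logically cut the edge $(v, p(v))$ to $v$'s parent, and reset $s(v) \leftarrow 0$ so that $v$ contributes nothing further to its ancestors.

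The lower bound is immediate: a component is emitted only when $s(v) \ge t$. For the upper bound I would use the invariant that at the moment $v$'s cut check is performed, every child $u$ that still contributes to $s(v)$ satisfies $s(u) < t$ (otherwise $u$ would have been cut already). Together with $c_v \le X$ and the fact that $v$ has at most $D$ children in the rooted tree---in fact at most $D-1$ if $v \neq r$, since the edge to $p(v)$ takes a slot---this yields $s(v) < (D-1)t + X$ at every non-root cut, and $s(r) < Dt + X$ on the root's pending sum.

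The main obstacle is the leftover subtree at $r$ after the DFS terminates. If $s(r) \ge t$, declare it directly as the final component. Otherwise, $s(r) < t$ and emitting it would violate the lower bound. Since $\sum_v c_v \ge t$, the inequality $s(r) < t$ forces at least one earlier cut; let $w$ be the most recent one. I would resolve the leftover by \emph{un-cutting} the edge $(w, p(w))$ and merging $w$'s emitted component with $r$'s leftover. Because $w$ was the last cut in post-order, no proper ancestor of $w$ was ever cut, so $p(w)$ and all ancestors of $w$ up to $r$ still lie in $r$'s leftover; hence un-cutting this tree edge restores connectivity. The merged component has sum at least $t$ (inherited from $w$'s piece) and strictly less than $(D-1)t + X + t = Dt + X$, exactly the required window.

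Running time is $O(n)$: the post-order DFS touches each vertex and edge a constant number of times, the cut rule is a constant-time update per vertex, and the final merge is a single edge un-cut (the identity of the most recent cut can be tracked with an $O(1)$-update variable). No step requires more than linear bookkeeping, so the overall complexity is $O(n)$ as claimed.
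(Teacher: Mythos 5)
Your proof is correct and follows essentially the same approach as the paper: a post-order DFS that accumulates label sums, cuts the parent edge the moment a subtree's pending sum reaches $t$, and merges the undersized leftover at the root with an adjacent emitted component. The only cosmetic differences are that the paper roots at a leaf (so every non-root vertex has at most $D-1$ children and the root has exactly one), whereas you root arbitrarily and absorb the root's extra child into the same $Dt + X$ bound, and you make explicit that the merge partner is the most recently cut component (and justify that its parent edge leads back into the root's leftover), where the paper just says ``reconnect to another component.''
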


Now we proceed to the algorithm, which we split into two parts. The main part is \OneRoundShortCycle, which finds many vertex disjoint cycles on a graph with diameter $O(\log n)$.

\begin{algorithm}[h]
\caption{\OneRoundShortCycle, takes a graph $G$ with $n$ vertices, $m$ edges, maximum degree $\Delta$, and diameter $O(\log n)$. Returns vertex disjoint cycles of length $O(\log n)$ containing at least $\frac{m-5n}{10\Delta\sqrt{m}}$ vertices}
\begin{algorithmic}[1]
\State Let $T$ be a spanning tree of diameter $O(\log n)$ of $G$.
\For{$v \in V(G)$} let $c_v \assign \deg v.$
\EndFor
\State Let $K \assign $\TreeSplit$(n, 4\sqrt{m}, T, \Delta, \{c_v\}_{v\in V(G)})$. \label{line:treesplitnrtn}
\State Let $K = \{K_1, K_2, \dots, K_{|K|}\}.$
\State Initialize graph $H$ on $|K|$ vertices as
empty. Each vertex $i$ will correspond to $K_i.$
\For{$1 \le i \le |K|$}
	\State Let $S_i$ be a spanning tree of $K_i$ of diameter $O(\log n)$.
\EndFor
\State Let $H$ be the graph obtained by contracting the components
$K_1, K_2, \dots, K_{|K|}$
\State Remove the edges in $H$ corresponding to the edges in the trees
$S_1, \dots, S_{|K|}.$
\State Let $f: E(H) \to E(G)$ be the corresponding edge injection (see Subsection \ref{sec:contraction}).
\State Initialize $C'$ as empty (set of vertex disjoint cycles on vertices of $H$)
\For{$1 \le i \le |K|$}
	\For{$i < j \le |K|$}
		\If{$H$ has edge $ij$ at least twice and none of $i, j$ used in $C'$ yet} add cycle $ij$ of length $2$ to $C'$.
		\EndIf
	\EndFor
\EndFor
\For{$1 \le i \le |K|$}
	\If{$i$ not used in $C'$ and $i$ has a self-loop in $H$} add the self-loop $i$ to $C'$.
	\EndIf
\EndFor
\State \Return \PullUp$(H, C', K, \{S_i\}_{i=1}^{|K|}, f)$.
\end{algorithmic}
\label{algo:OneRoundShortCycle}
\end{algorithm}

\begin{algorithm}[h]
  \caption{\ImprovedShortCycle, takes a graph $G$ with $n$ vertices,
    $m = 10n$ vertices, and maximum degree $\Delta$. Returns a
    vertex disjoint cycles containing at least $\frac{m}{10\Delta}$ vertices.}
  \begin{algorithmic}[1]
\State Initialize $C$ as empty ($C$ is the set of cycles we've found).
\If{$n \le 100$} \Return \NaiveShortCycle($G$).
\EndIf
\If{$C$ contains at least $\frac{m}{10\Delta}$ vertices} \Return $(G, C).$ \label{line:checkc}
\EndIf
\State Define $R \assign $ \LowDiamDecomp($G, \frac{1}{12}$).
\State Let $H_1, H_2, \dots, H_k$ be the connected components of $E(G) \backslash R.$
\For{$1 \le i \le k$} let $C_i \assign$ \OneRoundShortCycle($H_i$), where $C_i$ is a set of cycles.
\EndFor
\For{$1 \le i \le k$} $C \assign C \cup C_i$, delete vertices from $C_i$ from $G$.
\EndFor
\State Return to line \ref{line:checkc}.
\end{algorithmic}
\label{algo:ImprovedShortCycle}
\end{algorithm}

We start by analyzing the runtime and guarantees of Algorithm \ref{algo:OneRoundShortCycle}.
\begin{lemma}
\label{lemma:oneround}
 When given a graph $G$ with $n$ vertices, $m$ edges, and maximum degree $\Delta$, Algorithm \ref{algo:OneRoundShortCycle} returns vertex disjoint cycles of length $O(\log n)$ containing at least $\frac{\max(0, m-5n)}{10\Delta\sqrt{m}}$ vertices. It runs in time $O(m).$
\end{lemma}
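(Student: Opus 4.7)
My plan is to verify separately the $O(m)$ runtime, the $O(\log n)$ cycle-length bound, and the claimed lower bound on the number of covered vertices; I expect the last of these to be the main obstacle, while the first two are routine given the subroutine guarantees. A BFS tree of $G$ (which has diameter $O(\log n)$ by hypothesis) gives the spanning tree $T$ of diameter $O(\log n)$ in $O(m)$ time. Because {\TreeSplit} cuts $T$ into connected subtrees, each component $K_i$ is itself a subtree of $T$, and I can take $S_i$ to be that subtree; this constructs all $S_i$ in $O(n)$ total with $\diam(S_i) = O(\log n)$. Contraction, the edge injection $f$, and the scans for multi-edges and self-loops reduce to bucket sorting the edges of $H$ by unordered endpoint pair, taking $O(m)$ time, and {\PullUp} runs in $O(n)$ by Lemma \ref{pullup}. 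Every cycle in $C'$ has length at most $2$, so Lemma \ref{pullup} returns cycles of length at most $2 \cdot O(\max_i \diam(S_i)) = O(\log n)$.

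For the vertex count I may assume $m > 5n$, since otherwise the bound is $0$. Applying Lemma \ref{treesplit} with $t = 4\sqrt{m}$, $X = \Delta$, and tree max-degree $D \le \Delta$ yields each $K_i$ satisfying $\sum_{v \in K_i} \deg_G(v) \in [4\sqrt{m},\, 4\sqrt{m}\Delta + \Delta] \subseteq [4\sqrt{m},\, 5\Delta\sqrt{m}]$. Summing the lower bound across components gives $k := |K| \le 2m/(4\sqrt{m}) = \sqrt{m}/2$. After deleting the $n-k$ edges belonging to the trees $S_i$, the graph $H$ has $k$ vertices and $m - n + k$ edges, and $\deg_H(i) \le \sum_{v \in K_i} \deg_G(v) \le 5\Delta\sqrt{m}$.

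The crux of the argument is the following observation about the set $F := V(H) \setminus U$ of vertices not covered by $C'$: for any $i, j \in F$, $H$ has no self-loop at $i$ and at most one edge between $i$ and $j$. Otherwise, since $i$ and $j$ never enter $U$, they were still free when the algorithm processed that pair (or that self-loop), and the greedy would have added the corresponding cycle, contradicting $i, j \in F$. Hence the edges of $H$ with both endpoints in $F$ number at most $\binom{|F|}{2} \le k^2/2 \le m/8$, so at least $(m - n + k) - m/8 \ge 7m/8 - n$ edges touch $U$. Each such edge contributes at least $1$ to $\sum_{i \in U} \deg_H(i)$, giving $7m/8 - n \le \sum_{i \in U} \deg_H(i) \le 5\Delta\sqrt{m} \cdot |U|$. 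Since $2(7m/8 - n) - (m - 5n) = 3m/4 + 3n \ge 0$, this yields $|U| \ge (m-5n)/(10\Delta\sqrt{m})$, and Lemma \ref{pullup} guarantees that {\PullUp} returns cycles covering at least $|U|$ vertices of $G$, completing the bound.
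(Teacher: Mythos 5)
Your proof is correct and takes essentially the same approach as the paper: bound $|K|\le\sqrt m/2$ and $\Delta(H)\le 5\Delta\sqrt m$ via the \TreeSplit\ guarantees, then use maximality of the greedy $C'$ to argue that the graph induced on the uncovered vertices $F$ is simple and hence has at most $\binom{|F|}{2}$ edges, which forces many edges to touch $U$. Your bookkeeping is a little tighter than the paper's (which loosely invokes $m/2$ rather than $m/8$ as the bound on edges within $F$), but the intermediate claims and the structure of the argument match.
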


\begin{proof}
  First, we claim that $|K| \le \frac{1}{2} \sqrt{m}.$ This follows
  from the guarantees of {\TreeSplit} (Lemma \ref{treesplit}) used on
  line \ref{line:treesplitnrtn}. In order to apply Lemma
  \ref{treesplit} we first must check that $\sum_v c_v \ge 4\sqrt{m}.$
  This is clear though, as $\sum_v c_v = 2m \ge 4\sqrt{m}.$ By Lemma
  \ref{treesplit}, we know that the sum of labels in each $K_i$ is at
  least $4\sqrt{m}$, while the sum of labels over all $K_i$ is at most
  $\sum_v c_v = 2m$. Therefore, $|K| \cdot 4\sqrt{m} \le 2m$, so
  $|K| \le \frac{1}{2} \sqrt{m}.$

Also, by the guarantees of {\TreeSplit} (Lemma \ref{treesplit}) and the construction of graph $H$, we know that every vertex of $H$ has degree at most $\Delta \cdot 4\sqrt{m} + \Delta \le 5\Delta \sqrt{m}$. So $\Delta(H) \le 5\Delta \sqrt{m}.$

Next, we show that {\OneRoundShortCycle} indeed satisfies its
guarantee of removing cycles of length $O(\log n)$ containing at least
$\frac{\max(0, m-5n)}{10\Delta \sqrt{m}}$ vertices. Throughout we
assume that $m \ge 5n$, or else the claim is obvious.  It is clear
that $C'$ (as defined in \OneRoundShortCycle) must be a maximal
collection of vertex disjoint cycles of length $1$ and $2$. In other
words, it cannot be enlarged only by adding in new $1$ and
$2$-cycles. Then, we compute the number of edges touching at least
some vertex of $C'.$ If $C'$ involves $t$ vertices, at most
$\Delta(H) t \le 5\Delta \sqrt{m} t$ edges touch some vertex involved
in $C'$. By the pigeonhole principle, any graph with at most
$\frac{1}{2} \sqrt{m}$ vertices and at least $\frac{m}{2}$ edges must
have either a $1$ or $2$-cycle. Therefore, by maximality, we have that
\[ 5\Delta \sqrt{m} t \ge \frac{m}{2}-n, \text{ so we have that } t
  \ge \frac{m-2n}{10\Delta \sqrt{m}} \] as desired.  Combining the
previous discussion with the guarantees of {\PullUp} (Lemma
\ref{pullup}) shows that {\OneRoundShortCycle} successfully removes
cycles of length $O(\log n)$ of total length at least
$\frac{m-5n}{10\Delta \sqrt{m}}$. It is clear that
{\OneRoundShortCycle} runs in time $O(m).$
\end{proof}

Now we proceed to analyzing Algorithm \ref{algo:ImprovedShortCycle}.
\begin{lemma}
\label{lemma:runtimeimprove}
When given a graph $G$ with $n$ vertices, $m = 10n$ edges, and maximum degree $\Delta$, Algorithm \ref{algo:ImprovedShortCycle} with high probability returns vertex disjoint cycles of length $O(\log n)$ containing at least $\frac{m}{10\Delta}$ vertices. It runs in $O(m\sqrt{n})$ time.
\end{lemma}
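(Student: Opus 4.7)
The plan is to analyze the main loop of \ImprovedShortCycle by bounding the progress per iteration and multiplying by a per-iteration cost bound. The base case $n \le 100$ is disposed of directly by Lemma \ref{naivecycle}, which outputs cycles of length $O(\log n) = O(1)$ covering at least $\frac{m-2n}{\Delta} = \frac{8n}{\Delta} \ge \frac{m}{10\Delta}$ vertices, in $O(1)$ time. In the recursive case, every cycle produced has length $O(\log n)$: \LowDiamDecomp with $\beta = 1/12$ (Theorem \ref{lowdiamdecomp}) returns components of diameter $O(\log n)$, and \OneRoundShortCycle applied to such a component yields cycles of length $O(\log n)$ by Lemma \ref{lemma:oneround}.

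To bound per-iteration progress, consider the current graph $G$ (possibly shrunk by previous deletions) with $n'$ vertices and $m'$ edges. With high probability \LowDiamDecomp removes only $m'/12$ edges, so the resulting components $H_1,\dots,H_k$, with $n_i := |V(H_i)|$ and $m_i := |E(H_i)|$, satisfy $\sum_i m_i \ge 11m'/12$ and $\sum_i n_i \le n'$. Lemma \ref{lemma:oneround} guarantees that \OneRoundShortCycle$(H_i)$ produces vertex-disjoint cycles covering at least $\frac{\max(0,m_i - 5n_i)}{10\Delta\sqrt{m_i}}$ vertices. Dropping the ``sparse'' components with $m_i \le 5n_i$ (which contribute zero) and using $\sqrt{m_i} \le \sqrt{m'}$ in each surviving term, summation yields at least
\[
\frac{\sum_{i:\,m_i>5n_i}(m_i - 5n_i)}{10\Delta\sqrt{m'}} \;\ge\; \frac{\sum_i m_i - 5\sum_i n_i}{10\Delta\sqrt{m'}} \;\ge\; \frac{11m'/12 - 5n'}{10\Delta\sqrt{m'}}
\]
newly covered vertices per iteration.

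The key invariant is that, as long as the stopping criterion has not triggered, $|C| < m/(10\Delta) = n/\Delta$, so fewer than $n/\Delta$ vertices of $G$ have been deleted so far. Since the original graph has maximum degree $\Delta$, this deletes at most $n$ edges in total, so throughout the execution we have $m' \ge m - n = 9n$ and $n/2 \le n' \le n$. Plugging these bounds into the estimate above, each iteration adds at least $\frac{11 \cdot 9n/12 - 5n}{10\Delta\sqrt{10n}} = \Omega(\sqrt{n}/\Delta)$ vertices to $C$, so the loop terminates after $O(\sqrt{n})$ iterations.

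For the runtime, each iteration executes one call to \LowDiamDecomp at cost $O(m')$, plus one call to \OneRoundShortCycle per component at cost $O(m_i)$ (Lemma \ref{lemma:oneround}), summing to $O(m') = O(m)$ per iteration. Combined with the $O(\sqrt{n})$ iteration bound, the total runtime is $O(m\sqrt{n})$. The only delicate point is that the excess $11m'/12 - 5n'$ must stay a constant fraction of $n$ (not merely of $n'$) throughout; this is precisely what the deletion-budget invariant $m'\ge 9n$ provides, and it is the main obstacle that one has to check explicitly.
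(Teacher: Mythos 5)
Your proof is correct and takes essentially the same approach as the paper: both track the invariant that until the stopping criterion fires at most $\frac{m}{10\Delta}\cdot\Delta = \frac{m}{10}$ edges can have been deleted, subtract the $\frac{1}{12}$-fraction removed by \LowDiamDecomp, and then sum the guarantee of Lemma~\ref{lemma:oneround} over the resulting low-diameter components (bounding $\sqrt{m_i}$ by $\sqrt{m}$) to get $\Omega(\sqrt{n}/\Delta)$ newly covered vertices per iteration, hence $O(\sqrt{n})$ iterations at $O(m)$ cost each. The constant-level bookkeeping differs slightly (you use the current edge count $m'$ where the paper uses the top-level $m$ as a cruder upper bound) but the argument structure is identical.
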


\begin{proof}
We show that each iteration of Algorithm
\ref{algo:ImprovedShortCycle} also removes cycles of total length at
least $\Omega(\frac{m}{\Delta \sqrt{n}}).$ Indeed, if the algorithm
has not terminated yet (see line \ref{line:checkc}), then we have removed at most $\frac{m}{10\Delta}$
total vertices. Therefore, the graph $G$ will still have at least
$m - \frac{m}{10\Delta} \cdot \Delta = \frac{9}{10}m$ edges remaining.
Additionally, after taking into account the $\frac{m}{12}$ edges from
using \LowDiamDecomp, we see that \[ \sum_{i=1}^k |E(H_i)| \ge \frac{9}{10}m - \frac{1}{12}m \ge \frac{4}{5}m.\]
By Lemma \ref{lemma:oneround}, we get cycles of
length $O(\log n)$ covering at least
\begin{align*}
  &\sum_{i=1}^{|K|} \frac{\max(0, |E(H_i)| - 5|V(H_i)|)}{10\Delta \sqrt{|E(H_i)|}}\\ \ge &\sum_{i=1}^{|K|} \frac{|E(H_i)| - 5|V(H_i)|}{100\Delta\sqrt{n}}\\ \ge &\frac{\frac{4}{5}m - 5n}{100\Delta \sqrt{n}}\\ \ge &\frac{m}{500\Delta \sqrt{n}}
\end{align*}
  total vertices. Here we have used that $m = 10n$.

Therefore, we return to line \ref{line:checkc} of Algorithm \ref{algo:ImprovedShortCycle} at most $O(\sqrt{n})$ times. Each iteration takes $O(m)$ time, for a total runtime of $O(m \sqrt{n})$ as desired.
\end{proof}

\section{Main Algorithm and Analysis}
\label{sec:lemmas}

Before continuing, we will state Lemma \ref{sparsify}, whose proof we also defer to the appendix.
We need this to ensure that the graphs we pass to lower levels of the recursion will be sparse
and have bounded maximum degree.
\begin{restatable}{lemma}{restatesparsify}
\label{sparsify}
Let $G$ be a graph with $n$ vertices and $m$ edges. Let $k$ be an integer such that $m \ge k.$
Then algorithm {\Sparsify}$(G, k)$ (Algorithm \ref{algo:sparsify}) returns a subgraph $G' \subseteq G$ with
$n$ vertices, $k$ edges, and $\Delta(G') \le \frac{(2k+4n)\Delta(G)}{m}.$
\end{restatable}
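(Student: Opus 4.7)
The plan is to verify three properties of the output $G' = \Sparsify(G,k)$ of Algorithm \ref{algo:sparsify}: (i) $|V(G')| = n$, (ii) $|E(G')| = k$, and (iii) $\Delta(G') \le (2k+4n)\Delta(G)/m$. Properties (i) and (ii) should be structural checks that follow directly by inspection of Algorithm \ref{algo:sparsify}: \Sparsify leaves the vertex set of $G$ untouched, and its selection rule terminates once exactly $k$ edges have been chosen. The hypothesis $m \ge k$ is precisely what is needed to ensure that this termination is reachable.

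For the degree bound (iii), the strategy is to show that \Sparsify distributes the $k$ selected edges across the vertices in a balanced manner, roughly proportional to the original degrees. Fixing an arbitrary vertex $v$, I expect an inequality of the form
\[
\deg_{G'}(v) \;\le\; \frac{2k}{m}\cdot \deg_G(v) + s_v,
\]
where $s_v$ is a small per-vertex slack coming from rounding/boundary effects in whatever grouping or thresholding rule is used inside \Sparsify (for instance, a ``$+1$ per group'' slack when partitioning $v$'s incident edges into groups of size $\lceil m/(2k) \rceil$, or a cap of $\lceil 2k\Delta(G)/m \rceil$ per vertex). Plugging $\deg_G(v) \le \Delta(G)$ into the first term recovers the $2k\Delta(G)/m$ piece of the target bound. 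The remaining $4n\Delta(G)/m$ piece should absorb the per-vertex slack $s_v$, with the factor $n$ arising from the total number of vertices contributing to this rounding overhead (and the factor $\Delta(G)/m$ from the uniform scale of the selection rule).

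The main obstacle is matching the precise combinatorial guarantees of the specific edge-selection rule implemented in Algorithm \ref{algo:sparsify} to the stated bound. Concretely, one must simultaneously argue that (a) the selection always manages to produce exactly $k$ edges (no vertex's quota causes a shortfall), and (b) no vertex's degree is inflated beyond the $2k\Delta(G)/m$ proportional share by more than $4n\Delta(G)/m$ of additive slack. The factor of $2$ on $k$ is naturally explained by each edge contributing to the degree budget of both endpoints, which forces per-vertex quotas to be relaxed by a factor of two compared to the naive $k\Delta(G)/m$ estimate. Once the per-vertex inequality is established, taking the maximum over $v$ immediately yields the claimed bound on $\Delta(G')$, completing the proof.
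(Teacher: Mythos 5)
Your proposal does not actually contain a proof of the degree bound; it is a heuristic guess about what kind of algorithm {\Sparsify} might be, and that guess is wrong. You imagine a one-shot edge-selection rule that assigns each vertex a quota roughly proportional to its original degree, with the slack $s_v$ coming from rounding when incident edges are grouped or thresholded, and you hope the $\frac{4n\Delta(G)}{m}$ term will soak up the total slack. Algorithm~\ref{algo:sparsify} does nothing of the sort. It is an \emph{iterative halving} scheme: while the graph still has at least $2k+2n$ edges, it first deletes tree edges bottom-up ({\SparsifyHelper}) until every vertex has even degree, then walks an Eulerian tour of each component and deletes every other edge, which (up to a per-component boundary edge) cuts every vertex degree exactly in half. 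There is no per-vertex quota, no proportional allocation, and no ``$+1$ per group'' slack; the relevant invariant is a global one about how the edge count shrinks per round.

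Concretely, the paper shows that after $i$ rounds the edge count $|E_i|$ satisfies $\frac{m}{2^i} - 2n \le |E_i| \le \frac{m}{2^i}$ (each round halves and then loses at most $n$ more edges to {\SparsifyHelper} and odd-length Eulerian tours), while the max degree is halved each round, so $\Delta$ after $r$ rounds is at most $\Delta(G)/2^r$. The loop stops once the edge count drops below $2k+2n$; combining this stopping condition with the lower bound on $|E_r|$ forces $m/2^r < 2k+4n$, i.e. $2^r > m/(2k+4n)$, which gives exactly $\Delta(G') < (2k+4n)\Delta(G)/m$. The ``$2k+4n$'' is a single threshold emerging from the round-count analysis, not a $2k$ term plus a $4n$ slack term as you decomposed it, and your intuition that the factor of $2$ on $k$ is due to each edge touching two endpoints is not what drives the bound here. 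To fix the proposal you would need to (a) describe the actual iterated Eulerian-tour halving, (b) prove the two-sided invariant on $|E_i|$ by induction, and (c) derive the bound on the number of rounds from the stopping condition; none of these steps appears in what you wrote.
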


Now, we proceed to our algorithm and analysis. In Algorithm \ref{algo:ShortCycle}, let $\hat{n}, \hat{m}$ denote the number of vertices and edges of the graph at the top level of the recursion.

\begin{algorithm}[p]
\caption{\ShortCycleDecomp, takes a graph $G$ with $n$ vertices, $m = 10n$ edges, max degree $\Delta$, depth $d$ of the recursion (starts at $0$), constant $k = \hat{n}^\frac{1}{c+1}$ \\
Returns vertex disjoint cycles containing at least $\frac{m}{10\Delta}$ vertices}
\begin{algorithmic}[1]
\If{$d = c-1$} \Return \ImprovedShortCycle($G$)
\EndIf
\State Initialize $C \assign \emptyset$ (our set of cycles found so far).
\While{$C$ contains less than $\frac{m}{10\Delta}$ total vertices} \label{line:start}
  \State Let $R \assign$ \LowDiamDecomp($G$, $\frac{1}{12}$).
  \State Let $(A_1, \dots, A_{\ell_1}, B_1, \dots, B_{\ell_2})$ be the connected components of $G\backslash R$, where $|V(A_i)| \le k$ for $1 \le i \le \ell_1$ and $|V(B_i)| > k$ for $1 \le i \le \ell_2.$ \label{line:comp}
  \If{$\sum_{i=1}^{\ell_1} |E(A_i)| \ge \frac{m}{4}$} $C \assign C
  \bigcup_i$ \NaiveShortCycle($A_i$). Go to line \ref{line:start}. \label{line:smallcase}
  \EndIf
  \State For $1 \le i \le \ell_2$, let $T_i$ be a spanning tree of $B_i$ of diameter $O(\log n).$
  \State Initialize $K \assign \emptyset$ ($K$ is a set of subsets of $V(G)$).
  \For{$1 \le i \le \ell_2$}
    \For{$v \in V(B_i)$} set $c_v$ to be the degree of $v$ in $B_i$.
    \EndFor
    \State $K \assign K \cup$ \TreeSplit$(|V(B_i)|, k, T_i, \Delta, \{c_v\}_{v\in V(B_i)})$. \label{line:treesplit}
  \EndFor
  \State Let $K = \{K_1, K_2, \dots, K_{|K|} \}$ (where $K_i \subseteq V(G)$).
  \For{$1 \le i \le |K|$}
    \State Let $S_i$ be a spanning tree of $K_i$ of diameter $O(\log n).$
  \EndFor
  \State Let $H$ be the graph obtained by contracting the components
  $K_1, K_2, \dots, K_{|K|}$
  \State Remove the edges in $H$ corresponding to the edges in the trees
  $S_1, \dots, S_{|K|}.$
  \State Let $f: E(H) \to E(G)$ be the corresponding edge injection (see Subsection \ref{sec:contraction}).
  \State Add isolated vertices to $H$ until $H$ has $\frac{20n}{k}$ vertices. \label{line:addvert}
  \State Let $H' \assign$ \Sparsify($H, \frac{20m}{k}$). \label{line:hprime}
  \State Let $f':E(H') \to E(G)$ be the restriction of $f$ from $E(H)$ to $E(H').$
  \State Let $C' \assign$ \ShortCycleDecomp($H', d+1, k$).
  \State Let $C \assign C \cup$ \PullUp$(H', C', K, \{S_i\}_{i=1}^{|K|}, f')$.
  \For{vertices $v$ part of a cycle in $C$} delete $v$ from $G$.
  \EndFor
\EndWhile
\State \Return $C$.
\end{algorithmic}
\label{algo:ShortCycle}
\end{algorithm}

We now analyze Algorithm {\ShortCycleDecomp}.

We first analyze the case in line \ref{line:smallcase}, where many
edges are within the components $A_i$ (where $|V(A_i)| \le k$).
\begin{lemma}
  \label{smallcase}
  Consider running Algorithm {\ShortCycleDecomp} on a graph $G$ with
  $n$ vertices, $m = 10n$ edges and maximum degree $\Delta$.  In line
  \ref{line:smallcase}, in the case that
  $\sum_{i=1}^{\ell_1} |E(A_i)| \ge \frac{m}{4}$, we can extract
  vertex disjoint cycles of length $O(\log n)$ containing at least
  $\frac{m}{20\Delta}$ vertices in $O(mk)$ time.
\end{lemma}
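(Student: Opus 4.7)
The plan is to apply \NaiveShortCycle{} (from Lemma \ref{naivecycle}) to each of the small components $A_1, \dots, A_{\ell_1}$ individually, and then take the union of the resulting cycles. Since the $A_i$ are vertex-disjoint (being distinct connected components of $G \setminus R$), the union will automatically be a vertex-disjoint collection of cycles. Each $A_i$ satisfies $|V(A_i)| \le k$, so any cycle returned by \NaiveShortCycle$(A_i)$ has length at most $2 \log |V(A_i)| \le 2 \log k = O(\log n)$, as required.

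For the vertex count, Lemma \ref{naivecycle} guarantees that running \NaiveShortCycle{} on $A_i$ yields vertex disjoint cycles covering at least $(|E(A_i)| - 2|V(A_i)|)/\Delta$ vertices in total. Summing over all $i$, the total number of vertices covered is at least
\[
\frac{1}{\Delta} \sum_{i=1}^{\ell_1} \bigl(|E(A_i)| - 2|V(A_i)|\bigr) \ge \frac{1}{\Delta}\left(\frac{m}{4} - 2n\right),
\]
where the inequality uses the hypothesis $\sum_i |E(A_i)| \ge m/4$ and the bound $\sum_i |V(A_i)| \le n$. Substituting $m = 10n$ gives $m/4 - 2n = m/4 - m/5 = m/20$, so the total is at least $m/(20\Delta)$ as claimed.

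For the runtime, Lemma \ref{naivecycle} tells us that \NaiveShortCycle$(A_i)$ runs in $O(|E(A_i)| + |V(A_i)|^2)$ time. Using $|V(A_i)| \le k$, we bound $|V(A_i)|^2 \le k \cdot |V(A_i)|$, so summing gives
\[
\sum_{i=1}^{\ell_1} O\bigl(|E(A_i)| + |V(A_i)|^2\bigr) = O\!\left(m + k \sum_{i=1}^{\ell_1} |V(A_i)|\right) = O(m + kn) = O(mk),
\]
since $m = 10n$ and $k \ge 1$. There is no real obstacle here: the only thing to be careful about is verifying that the $|V(A_i)|^2$ cost in Lemma \ref{naivecycle} aggregates nicely across the small components, which is precisely where the size bound $|V(A_i)| \le k$ is used to convert a quadratic dependence into the claimed $O(mk)$ bound.
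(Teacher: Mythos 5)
Your proof is correct and follows essentially the same approach as the paper: apply \NaiveShortCycle{} componentwise to the small components $A_i$, sum the vertex-coverage lower bounds from Lemma~\ref{naivecycle}, and aggregate the runtimes using $|V(A_i)| \le k$. The only cosmetic difference is that you bound the per-component cost as $O(|E(A_i)| + |V(A_i)|^2)$ and use $|V(A_i)|^2 \le k\,|V(A_i)|$, whereas the paper writes it directly as $O(|E(A_i)|\cdot|V(A_i)|) \le O(|E(A_i)|\cdot k)$; both give $O(mk)$.
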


\begin{proof}
By Lemma \ref{naivecycle}, we know that in a component $C$, we can find vertex disjoint cycles of length $O(\log n)$ containing at least $\frac{|E(C)| - 2|V(C)|}{\Delta}$ vertices in time $O\left(|E(C)| \cdot |V(C)|\right)$. Recall that the components $A_1, A_2, \dots, A_{\ell_1}$ in line \ref{line:comp} of Algorithm \ref{algo:ShortCycle} satisfy $|V(A_i)| \le k.$ Then in total we can find cycles of length $O(\log n)$ containing at least the following number of vertices:
\[ \sum_{i = 1}^{\ell_1} \frac{|E(A_i)| - 2|V(A_i)|}{\Delta} \ge \frac{m/4 - 2n}{\Delta} \ge \frac{m}{20\Delta} \] after using that $n = \frac{m}{10}.$

The total runtime is \[ \sum_{i = 1}^{\ell_1} O\left(|E(A_i)| \cdot k \right) = O(mk) \] as desired.
\end{proof}

Therefore, {\ShortCycleDecomp} will process line \ref{line:smallcase}
at most two times, because after that we would certainly have
constructed cycles containing at least
$\frac{m}{20\Delta} \cdot 2 = \frac{m}{10\Delta}$ vertices by Lemma
\ref{smallcase}. From now on, we assume that the condition of line
\ref{line:smallcase} is false, so
$\sum_{i=1}^{\ell_1} |E(A_i)| < \frac{m}{4}.$

We now bound the number of vertices in our contracted graph $H$ to show that the size of the graph passed down to lower levels of the recursion indeed decreases significantly.
\begin{lemma}
\label{kbound}
  Consider running Algorithm \ref{algo:ShortCycle} on a graph
  $G$ with $n$ vertices, $m = 10n$ edges and maximum degree $\Delta$.
  As defined in Algorithm \ref{algo:ShortCycle}, we have that with high probability $|K| \le \frac{20n}{k}.$ Therefore, after line \ref{line:addvert}, graph $H$ will have exactly $\frac{20n}{k}$ vertices.
\end{lemma}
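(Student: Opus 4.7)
The plan is to bound $|K|$ by directly counting the pieces produced by the calls to \TreeSplit on each big component $B_i$, using the lower-bound guarantee on the label sum of each output piece.

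First I would verify that each call to \TreeSplit on line \ref{line:treesplit} is legitimate, i.e.\ that the precondition $t \le \sum_{v} c_v$ holds with $t = k$. Since $B_i$ is a connected component with $|V(B_i)| > k$ and the labels are $c_v = \deg_{B_i}(v)$, we have
\[
\sum_{v \in V(B_i)} c_v \;=\; 2|E(B_i)| \;\ge\; 2\bigl(|V(B_i)| - 1\bigr) \;\ge\; 2k \;\ge\; k,
\]
so the hypothesis of Lemma \ref{treesplit} is satisfied for every $B_i$.

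Next I would invoke the main guarantee of Lemma \ref{treesplit}: the output partition of $V(B_i)$ consists of connected pieces each of whose label sum lies in $[k,\Delta k + \Delta]$. In particular, each piece contributed to $K$ from $B_i$ has label sum at least $k$, so the number of pieces produced from $B_i$ is at most
\[
\frac{\sum_{v \in V(B_i)} c_v}{k} \;=\; \frac{2|E(B_i)|}{k}.
\]
Summing over all big components and using $\sum_i |E(B_i)| \le m$ gives
\[
|K| \;\le\; \sum_{i=1}^{\ell_2} \frac{2|E(B_i)|}{k} \;\le\; \frac{2m}{k} \;=\; \frac{20n}{k},
\]
since $m = 10n$. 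The ``with high probability'' qualifier is inherited purely from the randomized call to \LowDiamDecomp used to produce the $B_i$'s in the first place; the bound on $|K|$ itself is deterministic once that decomposition is fixed. The consequence about $H$ having exactly $\frac{20n}{k}$ vertices is then immediate from the padding step on line \ref{line:addvert}.

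There is no real obstacle in this argument; the only thing to be careful about is that each $B_i$ has enough total degree to legally invoke \TreeSplit with parameter $k$, which is handled by the connectivity lower bound $|E(B_i)| \ge |V(B_i)| - 1 > k - 1$. Notably, the hypothesis $\sum_{i} |E(A_i)| < m/4$ plays no role in this lemma.
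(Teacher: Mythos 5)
Your proof is correct and follows essentially the same route as the paper's: verify the \TreeSplit{} precondition via connectivity of each $B_i$, use the lower bound $k$ on the label sum of each output piece, and divide the total label sum (at most $2m = 20n$) by $k$. Your remarks about the source of the high-probability qualifier and the irrelevance of the small-component hypothesis are accurate but not load-bearing.
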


\begin{proof}
This essentially follows from the guarantees of {\TreeSplit} (Lemma
\ref{treesplit}) as used in line \ref{line:treesplit}. To apply Lemma
\ref{treesplit}, we must show that in each component $B_i$,
we have $\sum_{v \in V(B_i)} c_v \ge k$. Because $B_i$ is connected, we know that
\[ \sum_{v \in V(B_i)} c_v \ge 2(|V(B_i)|-1) \ge k \] as
$|V(B_i)| > k$ by assumption.  Thus, by the guarantees of {\TreeSplit}
(Lemma \ref{treesplit}) used on line \ref{line:treesplit} of Algorithm
\ref{algo:ShortCycle}, we know that each $K_i$ will have total sum of
labels at least $k$. The sum of labels over all $K_i$ is at most the
sum of degrees of $G$, which equals $2m = 20n.$ Therefore,
$k \cdot |K| \le 20n$, so $|K| \le \frac{20n}{k}.$
\end{proof}

Additionally, we need to show that the maximum degree of our graphs doesn't increase significantly from one recursion level to the next.
\begin{lemma}
\label{maxdegree}
Consider running Algorithm \ref{algo:ShortCycle} on a graph
  $G$ with $n$ vertices, $m = 10n$ edges and maximum degree $\Delta$.
  Consider graph $H'$ as defined in line \ref{line:hprime}.
  Then we have with high probability that $\Delta(H') \le 110\Delta.$
\end{lemma}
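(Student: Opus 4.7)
The plan is to apply Lemma \ref{sparsify} to the call $\Sparsify(H, 20m/k)$; this requires plugging in an upper bound for $\Delta(H)$ and a lower bound for $|E(H)|$, since the latter appears in the denominator of the Sparsify guarantee.

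For the bound on $\Delta(H)$, I would use the TreeSplit guarantee. Each $K_i$ is a piece produced by a call to $\TreeSplit$ on the spanning tree $T_j$ of some $B_j$, where the tree has maximum degree at most $\Delta$ and the labels $c_v = \deg_{B_j}(v)$ are bounded by $\Delta$. Lemma \ref{treesplit} then bounds $\sum_{v \in K_i} c_v \le \Delta \cdot k + \Delta = \Delta(k+1)$. The degree of the vertex corresponding to $K_i$ in $H$ counts only the edges incident to $K_i$ that become non-self-loop edges (the spanning tree $S_i$ self-loops are removed), and is bounded by $\sum_{v \in K_i} \deg_{B_j}(v) \le \Delta(k+1)$, so $\Delta(H) \le \Delta(k+1)$.

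For the lower bound on $|E(H)|$, I would combine the LowDiamDecomp bound $|R| \le m/12$ with the fact that we have moved past line \ref{line:smallcase}, which forces $\sum_i |E(A_i)| < m/4$. These give $\sum_j |E(B_j)| \ge m - m/4 - m/12 = 2m/3$. Subtracting the removed spanning-tree self-loops, whose count is $\sum_i (|V(K_i)| - 1) \le \sum_j |V(B_j)| \le n = m/10$, yields $|E(H)| \ge 2m/3 - m/10 = 17m/30 = \Omega(m)$, which also verifies the Sparsify precondition $|E(H)| \ge 20m/k$ whenever $k$ is at least a sufficient constant.

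Plugging these into Lemma \ref{sparsify}, with $|V(H)| = 20n/k$ after the padding in line \ref{line:addvert} and target edge count $20m/k$, the Sparsify guarantee reads $\Delta(H') \le \bigl(2 \cdot 20m/k + 4 \cdot 20n/k\bigr)\, \Delta(H)/|E(H)| = (48m/k)\, \Delta(H)/|E(H)|$. Substituting the two bounds above gives $\Delta(H') \le (48m/k) \cdot \Delta(k+1)/(17m/30) = (1440/17)\, \Delta\, (k+1)/k$, which is at most $110\Delta$ for $k$ a modest constant (automatic from the Sparsify precondition). The high-probability qualifier is inherited from $\LowDiamDecomp$. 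The main subtlety is pinning down the meaning of the contraction when $V(G)$ is not fully partitioned by the $K_i$'s; the natural reading, consistent with padding $H$ to exactly $20n/k$ vertices in line \ref{line:addvert}, is that the $A_l$-vertices and the $R$-edges are discarded, under which the two bounds above go through as stated.
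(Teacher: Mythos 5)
Your proof follows the same structure as the paper's, and the main pieces — bounding $\Delta(H) \le \Delta(k+1)$ via TreeSplit and invoking Lemma~\ref{sparsify} with a lower bound on $|E(H)|$ — are right. But there is a gap in the edge-count step.

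You write $\sum_j |E(B_j)| \ge m - m/4 - m/12 = 2m/3$, starting the count from $m$. That is only valid on the \emph{first} pass through the while loop. The lemma must hold for $H'$ on every iteration of line~\ref{line:start}, and by the time the loop runs again, vertices found in cycles have been deleted from $G$, so the current edge count is strictly less than $m$. The paper first establishes that whenever the loop re-enters line~\ref{line:start}, $G$ still has at least $\tfrac{9}{10}m$ edges: this is because $C$ contains at most $\tfrac{m}{10\Delta}$ vertices, each of degree at most $\Delta$, so at most $\tfrac{m}{10}$ edges have been deleted. The correct starting point is therefore $\tfrac{9}{10}m$, giving $\sum_j |E(B_j)| \ge \tfrac{9}{10}m - \tfrac{1}{12}m - \tfrac{1}{4}m \ge \tfrac{5}{9}m$ and then $|E(H)| \ge \tfrac{4}{9}m$, not your $\tfrac{17}{30}m$. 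The arithmetic still closes under the corrected bound (one gets $\Delta(H') \le \tfrac{108(k+1)}{k}\Delta \le 110\Delta$ for $k$ a modest constant), so the conclusion is safe, but as written your derivation is unjustified beyond the first iteration and needs the extra claim that $G$ retains a constant fraction of its edges across iterations.
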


\begin{proof}
First, we claim that each time we return to line \ref{line:start} of Algorithm \ref{algo:ShortCycle},
the graph $G$ still contains at least $\frac{9}{10}m$ edges. Indeed, we return to line \ref{line:start}
only if we have only created cycles containing at most $\frac{m}{10\Delta}$ vertices.
Therefore, the number of edges $G$ still has among the remaining vertices is at least
$m - \frac{m}{10\Delta} \cdot \Delta = \frac{9}{10}m.$

Next, we show that $\Delta(H) \le (k+1) \Delta(G) = (k+1)\Delta.$ This
essentially follows from the guarantees of {\TreeSplit} (Lemma
\ref{treesplit}). To apply Lemma \ref{treesplit}, must show that in
each component $B_i$, we have $\sum_{v \in V(B_i)} c_v \ge k$. Because
$B_i$ is connected, we know that
\[ \sum_{v \in V(B_i)} c_v \ge 2(|V(B_i)|-1) \ge k \] as
$|V(B_i)| > k$ by assumption.  Thus, by the construction of $H$ in
Algorithm \ref{algo:ShortCycle}, we can see that the degree of vertex
$h \in H$ equals the sum of the labels of the vertices in $K_h.$ By
the guarantees of {\TreeSplit} (Lemma \ref{treesplit}) used on line
\ref{line:treesplit} of Algorithm \ref{algo:ShortCycle}, we can see
that the sum of labels of $K_h$ is at most
$\Delta k + \Delta = \Delta(k+1).$

If we are to get to line \ref{line:hprime}, then we must have skipped
over line \ref{line:smallcase}.  Therefore, we know that
$\sum_{i = 1}^{\ell_2} |E(B_i)| \ge \frac{9}{10}m - \frac{1}{12}m -
\frac{1}{4}m \ge \frac{5}{9}m.$
Because we are removing the edges in $S_1, S_2, \dots, S_{|K|}$ before contracting to get $H$, we have
that
\[ |E(H)| \ge \sum_{i = 1}^{\ell_2} |E(B_i)| - n \ge \frac{4}{9}m. \]
By Lemma \ref{sparsify}, we know that
\begin{align*}
    \Delta(H') &\le \frac{2 \cdot \frac{20m}{k} + 4 \cdot
    \frac{20n}{k}}{4m/9} \cdot \Delta(H)\\ &\le \frac{2 \cdot
    \frac{20m}{k} + 4 \cdot
    \frac{20n}{k}}{4m/9} \cdot \Delta(k+1)\\ &\le 110 \Delta
\end{align*}
\end{proof}

Now, Lemma \ref{maxdegree} allows us to bound the total number of edges processed per level in the recursion.
\begin{lemma}
\label{nextlevel}
Consider running Algorithm \ref{algo:ShortCycle} on a graph
  $G$ with $n$ vertices, $m = 10n$ edges and maximum degree $\Delta$.
  The number of edges processed in level $\ell$ of the recursion with high probability
  is $O(110^{\ell} \hat{m}).$
\end{lemma}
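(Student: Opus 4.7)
The plan is to prove Lemma \ref{nextlevel} by induction on the recursion depth $\ell$, showing that the total edges processed at depth $\ell+1$ is at most $110$ times that at depth $\ell$. Write $E_\ell$ for the total of $|E(G)|$ summed over every recursive invocation of {\ShortCycleDecomp} at depth $\ell$; then $E_0 = \hat{m}$ trivially, and the statement reduces to $E_{\ell+1} \le 110 \, E_\ell$.

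First I would analyze a single call at depth $\ell$ on a graph with $m = 10n$ edges and maximum degree $\Delta$. By construction (lines \ref{line:addvert} and \ref{line:hprime}) together with Lemma \ref{kbound}, every while-loop iteration that reaches the recursive branch creates exactly one child subproblem on $\frac{20m}{k}$ edges, so the analysis reduces to counting how many times that branch is taken. Lemma \ref{smallcase} already disposes of the alternative: the line \ref{line:smallcase} branch can fire at most twice before the termination condition of the while loop is met, and so it contributes nothing to depth $\ell+1$.

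For the recursive branch, I would combine the correctness guarantee of the sub-call with Lemma \ref{maxdegree}: the child graph $H'$ has $\frac{20m}{k}$ edges and maximum degree at most $110\Delta$, so the sub-call returns vertex-disjoint cycles covering at least $\frac{20m/k}{10 \cdot 110\Delta} = \frac{m}{55 k \Delta}$ vertices of $H'$. Lemma \ref{pullup} preserves the vertex count, so these lift to cycles in $G$ covering at least $\frac{m}{55 k \Delta}$ vertices. Since the while loop exits as soon as the total coverage hits $\frac{m}{10\Delta}$, the recursive branch fires at most $\lceil 5.5 k \rceil$ times. Each such iteration injects a subproblem of $\frac{20m}{k}$ edges, so the total subproblem edges per call is at most $5.5 k \cdot \frac{20m}{k} = 110\, m$, which yields $E_{\ell+1} \le 110 \, E_\ell$ and closes the induction.

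The main obstacle is that the iteration bound depends on the sub-call actually delivering the promised vertex coverage, so this lemma cannot be proved in isolation --- it must be coupled with the correctness statement of {\ShortCycleDecomp} (that a call on a graph with $m = 10n$ edges and maximum degree $\Delta$ returns vertex-disjoint cycles covering $\frac{m}{10\Delta}$ vertices). That correctness claim is itself proved by induction upward from the base case $d = c-1$, where Lemma \ref{lemma:runtimeimprove} supplies the guarantee for {\ImprovedShortCycle}. All probabilistic statements are high-probability conclusions of {\LowDiamDecomp} used inside Lemmas \ref{kbound} and \ref{maxdegree}, and a union bound over the polynomially many recursive calls preserves the high-probability conclusion.
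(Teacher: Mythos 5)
Your proof is correct and follows essentially the same route as the paper: use Lemma \ref{maxdegree} to bound $\Delta(H') \le 110\Delta$, deduce via the recursive guarantee and Lemma \ref{pullup} that each while-loop iteration covers at least $\frac{m}{55k\Delta}$ new vertices, conclude the loop runs at most $\frac{11}{2}k$ times, and multiply by the $\frac{20m}{k}$ edges per subproblem to get $110m$. Your explicit remark that the iteration bound must be coupled with the inductive correctness guarantee for the sub-call is a point the paper leaves implicit, but it does not change the argument.
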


\begin{proof}
We go by induction and show that the number of edges processed
in the next level is at most $110$ times the number of edges in
the previous level. By the guarantees of algorithm {\PullUp} (Lemma \ref{pullup}) and the
recursive guarantees of Algorithm \ref{algo:ShortCycle},
we can see that during each iteration of the algorithm the total length of cycles in $C$ (our set of cycles)
will increase by at least $\frac{|E(H')|}{10 \Delta(H')} \ge \frac{20m/k}{10 \cdot 110 \Delta} \ge \frac{m}{55 \Delta k}.$ Here we used Lemma \ref{maxdegree} to show that $\Delta(H') \le 110 \Delta.$
Therefore, the algorithm will return to line \ref{line:start} at most $\frac{11}{2} k$ times on the same node in the recursion tree, because after that we will have removed at least $\frac{m}{55 \Delta k} \cdot \frac{11}{2} k = \frac{m}{10\Delta}$ vertices. As each $H'$ has $\frac{20m}{k}$ edges,
the total number of edges passed to the next level is at most $\frac{20m}{k} \cdot \frac{11}{2} k = 110m$ as desired.
\end{proof}

We now are ready to state a bound on the runtime of Algorithm \ref{algo:ShortCycle}.
\begin{theorem}
  \label{thm:runtime}
  For all integers $c$, Algorithm \ref{algo:ShortCycle}
  takes as input a graph $G$ with $\hat{n}$ vertices, $\hat{m} = 10\hat{n}$ edges, and maximum degree $\Delta$,
  and with high probability returns vertex disjoint cycles of length $O(\log \hat{n})^c$ containing at least
  $\frac{\hat{m}}{10\Delta}$ vertices. The runtime is $O\left(\hat{m} \hat{n}^\frac{1}{c+1} \cdot 500^c \right)$.
\end{theorem}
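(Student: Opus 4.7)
The plan is to induct on the remaining recursion depth $c-d$. The base case $d = c-1$ is immediate from Lemma \ref{lemma:runtimeimprove}, which says \ImprovedShortCycle returns vertex-disjoint cycles of length $O(\log n)$ covering $\frac{m}{10\Delta}$ vertices in $O(m\sqrt{n})$ time. For the inductive step I analyze a single call at depth $d$ and propagate three guarantees: a coverage bound (that each while-loop iteration makes enough progress to exit in $O(k)$ rounds), a cycle-length bound, and a running-time bound.

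For coverage, the small-case branch (Lemma \ref{smallcase}) covers $\frac{m}{20\Delta}$ vertices per invocation and can fire at most twice. In the remaining case, after contracting components and sparsifying, we obtain $H'$ with $\frac{20n}{k}$ vertices (Lemma \ref{kbound}), exactly $\frac{20m}{k}$ edges by construction, and maximum degree $\Delta(H')\le 110\Delta$ (Lemma \ref{maxdegree}). The inductive hypothesis applied to $H'$ then yields vertex-disjoint cycles covering at least $\frac{|E(H')|}{10\Delta(H')}\ge\frac{m}{55\Delta k}$ contracted vertices, which \PullUp{} (Lemma \ref{pullup}) lifts to vertex-disjoint cycles in $G$ covering the same $K_i$-representatives. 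This forces the while loop to exit within $O(k)$ iterations. The cycle-length bound follows because the base case generates cycles of length $O(\log \hat n)$ and each subsequent pull-up replaces each edge of a cycle by a path of length $O(\log \hat n)$ inside some $S_i$; iterating this $c-1$ times gives the claimed $\paren{O(\log\hat n)}^{c}$ length.

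For the running time, the key ingredient is Lemma \ref{nextlevel}: the total number of edges processed at recursion depth $\ell$, summed across all subproblems, is $O(110^\ell \hat m)$. The non-recursive per-iteration work (including \LowDiamDecomp, \TreeSplit, \Sparsify, and \PullUp) is linear in the current edge count, so summing across all iterations and all calls at level $\ell$ gives $O(110^\ell \hat m \cdot \polylog \hat n)$; a geometric sum shows these intermediate layers are dominated by the bottom layer $\ell = c-1$. At that layer the per-subproblem size is $n_{c-1} = 20^{c-1}\hat n/k^{c-1}$, and \ImprovedShortCycle costs $O(m\sqrt{n_{c-1}})$ amortized per edge, so the total bottom-layer cost is at most
\[
O(110^{c-1}\hat m)\cdot\sqrt{n_{c-1}} \;=\; O\!\paren{\hat m\cdot (110\sqrt{20})^{c-1}\cdot \hat n^{1/(c+1)}},
\]
after plugging $k = \hat n^{1/(c+1)}$ so that $k^{c-1} = \hat n^{(c-1)/(c+1)}$ and $\sqrt{n_{c-1}} = \sqrt{20}^{\,c-1}\,\hat n^{1/(c+1)}$. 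Since $110\sqrt{20}< 500$, this fits inside the claimed $O(\hat m\hat n^{1/(c+1)}\cdot 500^c)$ bound.

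The main obstacle is aligning the per-level blow-ups: maximum degree growing by a factor $110$ (Lemma \ref{maxdegree}), subproblem size shrinking by $20/k$, and total edge count at each level inflating by $110$ (Lemma \ref{nextlevel}). All of these constants must simultaneously be charged into the final $500^c$ factor, and in particular the factor $\sqrt{n_{c-1}}$ at the base layer must realise exactly the target $\hat n^{1/(c+1)}$ — this is what pins down the choice $k = \hat n^{1/(c+1)}$ and tightly couples the bottom-layer cost of \ImprovedShortCycle with the accumulated overhead $(110\sqrt{20})^{c-1}$ from the $c-1$ contracting levels above it.
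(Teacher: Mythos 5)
Your proposal tracks the paper's argument closely: induction on the remaining recursion depth (the paper instead does a direct per-level accounting, but the substance is identical), with the same three ingredients — coverage per while-loop iteration via Lemmas \ref{kbound}, \ref{maxdegree}, and \ref{pullup}; cycle length growing by $O(\log \hat n)$ per pull-up; and the running-time bound driven by Lemma \ref{nextlevel} together with the $O(m\sqrt{n})$ cost of \ImprovedShortCycle{} at the bottom. The key computation $\sqrt{n_{c-1}} = \sqrt{20}^{\,c-1}\,\hat n^{1/(c+1)}$ and the observation $110\sqrt{20} < 500$ are exactly the numbers the paper relies on.

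One numerical claim in your runtime section is off, though it does not affect the final bound. You assert that the non-recursive work on intermediate level $\ell$ is $O(110^\ell \hat m \cdot \mathrm{polylog}\,\hat n)$. But Lemma \ref{nextlevel} counts edges \emph{passed to the next level}, so $O(110^\ell \hat m)$ is the total input size of subproblems at level $\ell$; the while loop at each such subproblem runs for $O(k)$ iterations (with linear work per iteration), and the small-case call to \NaiveShortCycle{} costs $O(m_i k)$ by Lemma \ref{smallcase}. Hence the non-recursive work at level $\ell$ is $O(110^\ell \hat m k)$ — a factor $k = \hat n^{1/(c+1)}$, which is polynomial, not polylogarithmic. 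The conclusion survives because $\sum_{\ell < c}110^\ell \hat m k = O(110^{c-1}\hat m k)$ is still dominated by the bottom-layer cost $O\bigl((110\sqrt{20})^{c-1}\hat m k\bigr)$, which is precisely how the paper closes this accounting; you should make the $k$-dependence explicit rather than sweeping it into a polylog.
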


\begin{proof}
First, note that per recursion level, by Lemma \ref{smallcase}, we will only perform the computation
listed on line \ref{line:smallcase} at most twice. The computation up to this point takes time $\tilde{O}(m+n)$
by Lemma \ref{lowdiamdecomp}.

Hence, by Lemma \ref{nextlevel}, we know that total number of edges
processed on the $\ell$-th level of recursion is
$O(110^\ell \hat{m}).$ Therefore, the total runtime contribution from
running {\NaiveShortCycle} on small components (see Lemma
\ref{smallcase}) is $O(110^{c-1} \hat{m}k)$. By Lemma
\ref{lemma:runtimeimprove}, the cost of running {\ImprovedShortCycle}
on the bottom level (which is level $c-1$) is at most
\begin{align*}
  &O\left(110^{c-1} \hat{m} \cdot \sqrt{\frac{20^{c-1}
  \hat{n}}{k^{c-1}}} \right)\\ = &O\left(\frac{500^c \hat{m}
  \sqrt{\hat{n}}}{k^\frac{c-1}{2}} \right)\\ = &O(500^c \cdot \hat{m}
  k)
\end{align*}
where we used that $k = \hat{n}^\frac{1}{c+1}.$
The cost of processing the graphs (running the non-recursive
steps of the algorithm) on the $i$-th level is $O(110^i \hat{m}k)$,
which sums to $O(110^{c-1} \hat{m}k)$ in total over all levels.

Therefore, the total runtime is then \[ O\left(500^c \cdot \hat{m} k + 110^c \hat{m} k\right) = O\left(\hat{m}\hat{n}^\frac{1}{c+1} \cdot 500^c \right) \] as desired.

By Lemma \ref{pullup}, at each level, the cycle lengths grow by a factor of $O(\log n).$
Therefore, the total length at the end will be $\paren{O(\log \hat{n})}^{c}$ as desired.
\end{proof}

We can now complete the proof of Theorem
\ref{thm:runtimeandreduction}.
\begin{proof}[Proof of Theorem \ref{thm:runtimeandreduction}]
Note the Theorem \ref{thm:runtime} implies that Algorithm \ref{algo:ShortCycle} satisfies the constraints of Lemma \ref{lemma:reduction}.
Also, Theorem \ref{thm:runtime} tells us that Algorithm \ref{algo:ShortCycle} runs time $O(500^c \cdot \hat{m}\hat{n}^\frac{1}{c+1}) = O(500^c \cdot \hat{n}^\frac{c+2}{c+1})$ because $\hat{m} = 10 \hat{n}.$

Therefore, combining Theorem \ref{thm:runtime} and Lemma \ref{lemma:reduction} tells us that we have an algorithm
that returns a $(20\hat{n}, \paren{O(\log \hat{n})}^c)$-short cycle decomposition which runs in time $O\left(\frac{\hat{m} \cdot 500^c \cdot \hat{n}^\frac{c+2}{c+1}}{\hat{n}} \right) = O(500^c \cdot \hat{m} \hat{n}^\frac{1}{c+1})$ as desired.
\end{proof}

{\small
\bibliographystyle{alpha}
\bibliography{refs}}
\pagebreak
\begin{appendix}
\section{Omitted Proofs}

In this section we give proofs for various lemmas which we omitted.
\restatenaivecycle*

\begin{proof}
  Consider the algorithm {\NaiveShortCycle} described as Algorithm~\ref{algo:NaiveShortCycle}

\begin{algorithm}[H]
\caption{\NaiveShortCycle, takes a graph $G$ with $n$ vertices, $m$ edges, and maximum degree $\Delta$ and returns vertex disjoint cycles of length $O(\log n)$ containing at least $\frac{m-2n}{\Delta}$ total vertices}
\begin{algorithmic}[1]
\State Initialize $C$ to be empty
\Repeat
\State While G has a vertex $u$ of degree $\le 2$, remove $u$ and edges incident to $u$ from $G$ \label{line:removeedgeNaive}
\State Run BFS from an arbitrary vertex $r$ until first non-tree edge $e$ found
\State Add the cycle formed by $e$ and tree edges in $C$
\State Remove the vertices in the cycle
\State Remove all corresponding edges
\Until{$G$ is empty}
\State
\Return $C$
\end{algorithmic}
\label{algo:NaiveShortCycle}
\end{algorithm}

After line~\ref{line:removeedgeNaive}, we
will get a graph with minimum degree $3$. Thus, after the BFS, we are
guaranteed to have a tree such that no non-leaf vertex of it has less
than 2 children. Thus, when we find a cycle from the non-tree edge
$e$, it is guaranteed to have length at most $2 \log n$.

Note that when the algorithm ends, we have removed all the
vertices. The only way we remove a vertex not in a cycle is by line 2,
where that vertex has at most 2 edges incident to it when we remove
it. Thus, line 2 can remove at most $2n$ edges incident in
total. Thus, the total number of edges removed by removing vertices in
cycles are at least $m - 2n$. Since the maximum degree of a vertex is
at most $\Delta,$ the number of vertices contained in the cycles must
be at least $\frac{m - 2n}{\Delta}$.  The BFS run in each iteration of
the loop runs in $O(n)$ time since we stop the BFS when the first
non-tree edges is found. Since the loop can run at most $O(n)$ times,
the time taken by the BFS over all iterations is $O(n^{2}).$ Removing
the edges incident to the cycle vertices requires a total time
$O(m+n)$ over all iterations, giving a total running time of
$O(m+n^{2}).$
\end{proof}

\restatepullup*
\begin{proof}
\begin{algorithm}[H]
Throughout, we used indices $\pmod{k}$ where it is obvious.

\caption{\PullUp, takes as inputs a graph $H'$, vertex disjoint cycles $C'$ on the vertices on $H'$, a partition $\{K_i\}_{i=1}^{|V(H')|}$ of the vertices of $G$, a vertex disjoint union of spanning trees $\{S_i\}_{i=1}^{|V(H')|}$ on another graph $G$, and a function $f: E(H') \to E(G)$ which satisfies $f(uv) \in E(K_u, K_v).$ Returns vertex disjoint cycles on the vertices on $G$.}
\begin{algorithmic}[1]
\State Initialize $C$ as empty (ending set of cycles on $G$).
\For{cycle $v_1 v_2 \dots v_k \in C'$}
  \For{$1 \le i \le k$} let $a_i b_{i+1} \assign f(v_i v_{i+1})$.
  \EndFor
  \For{$1 \le i \le k$} let $p_i$ be the path from $b_i \to a_i$ in tree $S_i.$
  \EndFor
  \State $C \assign C \cup b_1 p_1 a_1 b_2 p_2 a_2 \dots b_k p_k a_k$ (concatenation of paths).
\EndFor
\Return $C$
\end{algorithmic}
\label{algo:PullUp}
\end{algorithm}

All guarantees follow very easily from the description of Algorithm \ref{algo:PullUp}.
At a high level, note that by the definition of $H'$ and $K_i$, we know that cycles on $H'$ corresponds to ``cycles" on the components $K_i$
in the graph $G$. Now, simply use the edges of the spanning trees $S_i$ to recover a cycle on $G$.

It is obvious that we cover at least as many vertices among $C$ as in $C'$. Additionally, the lengths will increase by at most a factor of the diameter of some spanning tree $S_i$ by the construction. Vertex disjointness follows trivially. The runtime follows because the only operation we need to do is find paths between vertices in a spanning tree, which is time $O(n).$
\end{proof}

\restatetreesplit*

\begin{proof}
Let the vertices be numbered $1, 2, \dots, n$, and let the corresponding labels be $c_1, c_2, \dots, c_n.$
We outline the algorithm \TreeSplit, which takes $n, t, T, X$, and $\{c_i\}_{i = 1}^n$ as inputs.
\begin{algorithm}[H]
\caption{\TreeSplit, takes inputs $n, t, T, X$, and $\{c_v\}_{v \in V(T)}$ such that $\sum_i c_i \ge t$. Runs in time $O(n)$.
Partitions $T$ into connected subtrees so that each subtree has sum of labels between $t$ and $Dt+X$.}
\begin{algorithmic}[1]
\State Root $T$ at a leaf $\ell$.
\State Initialize an array $\extra[1 \dots n]$
\State Set $\extra[v] \assign c_v$ for all vertices $v$.
\State Run a depth first search through $T$, starting at $\ell$.
\State Let $v$ denote the vertex that is being currently processed.
\For{children $u$ of $v$}
	\State Recursively visit $u$.
	\State $\extra[v] \assign \extra[v] + \extra[u]$.
\EndFor
\If{$\extra[v] \ge t$}
	\State Remove the edge between $v$ and its parent.
	\State Set $\extra[v] \assign 0.$
\EndIf
\State End the depth first search.
\If{$\extra[\ell] < t$}
	\State Reconnect the component with $\ell$ to another component.
\EndIf
\State Let $C_1, C_2, \dots, C_k$ be the connected components of the resulting forest.
\State \Return $(C_1, C_2, \dots, C_k).$
\end{algorithmic}
\label{algo:treesplit}
\end{algorithm}

In the algorithm, $\extra[v]$ denotes the total sum of labels still attached to the rest of the tree $v$ after processing it.

We can show the correctness of the algorithm by induction, specifically that after processing vertex $v$, that $\extra[v] < t$ always.
Consider a vertex $v$. Note that it has at most $D-1$ children $u_1, \dots, u_k$ (as we rooted the tree at a leaf $\ell$). By induction it is clear that after visiting all children of $v$, it will be true that
\[ \extra[v] \le c_v + \sum_i \extra[u_i] \le (D-1)t + X. \]
If $t \le \extra[v] \le (D-1)t + X$, then we split off $v$ from its parent (we made a new component). Otherwise, $\extra[v] < t$, verifying the induction.
Finally, if the root $\ell$ satisfies $\extra[\ell] < t$ but $\extra[\ell] \neq 0$, then connect the component containing $\ell$ to another one (which we know has label sum at most $(D-1)t+X$). Therefore, it still holds that all components have sum of labels at most $t + (D-1)t + X = Dt + X.$

This shows the correctness of Algorithm \ref{algo:treesplit}.
\end{proof}

\restatesparsify*
\begin{proof}
Consider the following algorithms:
\begin{algorithm}[H]
\caption{\SparsifyHelper, takes the graph $G$ and a spanning tree root $r$, recursively remove the tree edge from leaves to the root with odd degree vertices.}
\begin{algorithmic}[1]
\For{$v$ be child of $r$}
\State \textsc{SparsifyHelper}$(G, v)$
\EndFor
\If{$r$ has odd degree}
\State remove the edge formed by $r$ and parent of $r$
\EndIf
\end{algorithmic}
\end{algorithm}
\begin{algorithm}[H]
\caption{\Sparsify, takes a graph $G$ with $m$ vertices, $n$ edges, and max degree $\Delta$, and a target edge count $k$.
Returns a subgraph with exactly $k$ edges and max degree at most $\frac{(2k+4n)\Delta}{m}$}
\begin{algorithmic}[1]
\While{$|E(G)| \ge 2k+2n$}
	\State For each connected component of $G$, construct a spanning tree rooted at an arbitrary vertex $r_i$ of that component.
	\For{Each root $r_i$}
	\State \textsc{SparsifyHelper}$(G, r_i)$
	\EndFor
	\State Perform a Eulerian Tour on each connected component, and remove every other edge starting from the first edge. Specifically, if the Eulerian tour has edges $e_1, e_2, \dots, e_\ell$ in that order, remove edges $e_1, e_3, e_5, \dots$
\EndWhile
\State Remove edges until the resulting graph has exactly $k$ edges.
\end{algorithmic}
\label{algo:sparsify}
\end{algorithm}

First we claim that the remaining graph after \textsc{SparsifyHelper}
will have even degree for all vertices. Since we are removing the tree
edge when the degree of the vertex is odd from the bottom to the root,
we are guaranteed that all vertices except the root will have an even
degree. But the sum of degrees of all vertices is even, so the root
will have an even degree as well. Thus, we can perform the Eulerian
Tour on each connected component . After removing every other edge,
every degree will be reduced by half. Note that in components with an
odd number of edges, we must remove both the first and last edges in
the Eulerian tour. In total, if we remove $t$ edges from the call to
{\SparsifyHelper} and that after doing this, we have $o$ components of
odd number of edges remaining. It is easy to see that our resulting graph after
deleting every other edge from the Eulerian tours on connected
components has $\frac{m-t-o}{2} \ge \frac{m}{2}-n$ edges.

We claim that after $i$th round,
$\frac{m}{2^i} - 2n \le |E_i| \le \frac{m}{2^i}$. This can be easily
shown by induction. After the $(i+1)$th round,
\begin{align*}
|E_{i+1}| &\ge \frac{|E_i|}{2} - n \ge \frac{m/2^i - 2n}{2} - n = \frac{m}{2^{i+1}}-2n \\
|E_{i+1}| & \le \frac{|E_i|}{2} \le \frac{m/2^i}{2} = \frac{m}{2^{i+1}}
\end{align*}
Also, when the algorithm ends, we have $k \le |E| < 2k+2n$. Assume
that the
algorithm needs $r$ rounds. Then, we have $k \le \frac{m}{2^r}$ and
$\frac{m}{2^r} - 2n < 2k+2n$. So $\frac{m}{2^r} < 2k + 4n$. Thus, the
number of rounds is in the range of
$(\log_2\frac{m}{2k+4n}, \log_2\frac{m}{k}]$. Therefore, our final
graph $G'$ will have
\[ \Delta(G') <
  \frac{\Delta(G)}{2^{\log_2\frac{m}{2k+4n}}}=\frac{(2k+4n)\Delta(G)}{m}. \]
To analyze the running time, note that the $i$-th round takes time
$O(\frac{m}{2^i} + n)$ time. As we are running for $O(\log n)$ rounds,
our total runtime will be
$\sum_{i = 0}^{O(\log n)} O(\frac{m}{2^i} + n) = O(m + n \log n)$ as
desired.
\end{proof}

\end{appendix}

\end{document}